\DeclareMathOperator{\Tr}{Tr}
\newtheorem{theorem}{Theorem}
\newtheorem*{theorem*}{Theorem}
\newtheorem{definition}{Definition}
\newtheorem*{definition*}{Definition}
\newtheorem{lemma}{Lemma}
\newtheorem*{lemma*}{Lemma}
\newtheorem{corollary}{Corollary}
\newtheorem{observation}{Observation}
\newtheorem{construction}{Construction}
\newtheorem*{construction*}{Construction}
\newcommand{\revise}[1]{\textcolor{black}{#1}}
\def\zo{$\{0,1\}$}
\def\>{\rangle}
\def\<{\langle}
\begin{document}

\title{Optimal Measurement Structures for Contextuality Applications}
\author{Yuan Liu}
\affiliation{Department of Computer Science, The University of Hong Kong, Pokfulam Road, Hong Kong}
\author{Ravishankar Ramanathan}
\email{ ravi@cs.hku.hk}
%\thanks{ravi@cs.hku.hk}
\affiliation{Department of Computer Science, The University of Hong Kong, Pokfulam Road, Hong Kong}

\author{Karol Horodecki}
\affiliation{Institute of Informatics, National Quantum Information Centre, Faculty of Mathematics, Physics and Informatics, University of Gda\'{n}sk, Wita Stwosza 57, 80-308 Gda\'{n}sk, Poland}
\affiliation{International Centre for Theory of Quantum Technologies, University of Gda\'{n}sk, Wita Stwosza 63, 80-308 Gda\'{n}sk, Poland}
\author{Monika Rosicka}
\affiliation{Institute of Informatics, National Quantum Information Centre, Faculty of Mathematics, Physics and Informatics, University of Gda\'{n}sk, Wita Stwosza 57, 80-308 Gda\'{n}sk, Poland}
\author{Pawe{\l} Horodecki}
\affiliation{International Centre for Theory of Quantum Technologies, University of Gda\'{n}sk, Wita Stwosza 63, 80-308 Gda\'{n}sk, Poland}
\affiliation{Faculty of Applied Physics and Mathematics, Gda\'{n}sk University of Technology, Gabriela Narutowicza 11/12, 80-233 Gda\'{n}sk, Poland}

%\affiliation{xxx}
%%%%%%%%%%%%%%%%%%%%%%%%%%%%%%%

\begin{abstract}
% Applications of the foundational Kochen-Specker (KS) theorem  have attracted much interest recently. Here, we show that measurement structures within KS proofs termed gadgets provide an optimal toolbox for contextuality applications including (i) constructing classical channels exhibiting entanglement-assisted advantage in zero-error communication, (ii) identifying large separations between quantum theory and binary generalised probabilistic theories, and (iii) finding optimal tests for contextuality-based semi-device-independent randomness generation. Furthermore, we introduce a higher-order generalisation of gadgets that we show exist within general KS proofs, and use them to construct novel KS proofs as well as general
% state-independent contextual (SIC) proofs.  

%pinpoint the role of gadgets in achieving the essential KS contradiction and introduce and study a higher-order generalization. 
%The constructions developed here can help resolve the remaining open problems regarding minimal KS proofs.

The Kochen-Specker (KS) theorem is a corner-stone result in the foundations of quantum mechanics describing the fundamental difference between quantum theory and classical non-contextual theories. Recently specific substructures termed $01$-gadgets were shown to exist within KS proofs that capture the essential contradiction of the theorem. Here, we show these gadgets and their generalizations provide an optimal toolbox for contextuality applications including (i) constructing classical channels exhibiting entanglement-assisted advantage in zero-error communication, (ii) identifying large separations between quantum theory and binary generalised probabilistic theories, and (iii) finding optimal tests for contextuality-based semi-device-independent randomness generation. Furthermore, we introduce and study a generalisation to definite prediction sets for more general logical propositions, that we term higher-order gadgets. We pinpoint the role these higher-order gadgets play in KS proofs by identifying these as induced subgraphs within KS graphs and showing how to construct proofs of state-independent contextuality using higher-order gadgets as building blocks. The constructions developed here may help in solving some of the remaining open problems regarding minimal proofs of the Kochen-Specker theorem.

\end{abstract}
\maketitle
\section{Introduction}
The Kochen-Specker (KS) theorem \cite{KS,Bell} is a cornerstone result in the foundations of quantum mechanics, which delineates the differences between quantum theory and a class of hidden-variable theories obeying the principle of non-contextuality (NCHVTs). NCHVTs assume that outcomes
are pre-assigned to measurements and independent of the
particular contexts in which the measurements are realized.
Informally, the KS theorem states that for every quantum system belonging to a Hilbert space of dimension $d$ greater than two, irrespective of its actual state, a finite set of measurements exists whose results 
%(described by a set of rank-one projectors in quantum mechanics) 
is logically impossible to be assigned of truth value  0 or 1 in a context-independent manner,
% Formally speaking, there are sets of rank-one projectors of Hilbert space dimension $d$ greater than two do not admit a deterministic non-contextual assignment (also known as KS-assignment) which is an assignment of 0 or 1 to each projector 
satisfying (i) Exclusivity: two orthogonal projectors are not allowed to be both assigned 1, and (ii) Completeness: for each $d$ mutually orthogonal projectors, one of them must be assigned 1.

KS sets are not the only means to identify the differences between quantum mechanics and NCHVTs. 
%It was later shown that bounds can be derived on the basis of outcome non-contextuality for certain `non-contextuality (NC) inequalities' that are specific linear combinations of probabilities of events or propositions. A violation of these bounds by quantum probabilities then establishes the distinction between quantum and non-contextual theories. NC inequalities can be of two types: (i) inequalities (such as from KS sets) that are violated by all quantum states of a specific dimension, these are termed statistical state-independent KS arguments, and (ii) inequalities that are only violated by a subset of quantum states, these are termed statistical state-dependent KS arguments. 
An interesting class of statistical state-dependent proofs of contextuality was also presented by Clifton \cite{Clifton93}, Stairs \cite{Stairs}, Hardy \cite{hardy1993nonlocality} and others \cite{Belinfante73, RRHP+20}. In these works, a prediction occurs with certainty in every non-contextual theory (such as the probability of an event being $0$ or $1$), while this is not the case in quantum theory. Such statistical proofs provide a simple and appealing contradiction between quantum and NCHVTs. Considering each projector in a Hilbert space as an atomic proposition, sets of the form $P \rightarrow \overline{Q}$ ($P$ being true implies $Q$ is false) or $P \rightarrow Q$ ($P$ being true implies $Q$ is true) have been termed as gadgets \cite{RRHP+20}, definite-prediction sets \cite{CA96}, bugs or true-implies-false and true-implies-true sets \cite{APSS18}. 
%Considering each vector in a Hilbert space as an atomic proposition, sets that have been considered have been of the form $P \rightarrow \overline{Q}$ ($P$ being true implies $Q$ is false) or $P \rightarrow Q$ ($P$ being true implies $Q$ is true). 
%Note that any Kochen-Specker set itself corresponds to building out of the atomic propositions some compound proposition that is a contradiction. 

As a central result in the foundations of quantum mechanics,
KS contextuality has yielded several exciting applications in quantum information science recently. These applications include entanglement-assisted advantage in zero-error communication \cite{CLMW10}, semi-device-independent randomness generation \cite{U13}, device-independent security \cite{HHHH+10},
universal quantum computation via magic state distillation \cite{Howard}, advantage in communication complexity \cite{gupta2022quantum},
self-testing quantum systems \cite{BRV+19}, etc.

In this paper, we introduce a general class of definite-prediction sets termed \textit{higher-order gadgets} that goes beyond the basic `true-implies-false' and `true-implies-true' structures considered thus far and show how these gadget measurement structures provide an optimal toolbox for a plethora of applications of contextuality \cite{HHHH+10, gupta2022quantum, Howard, CSW14, KC16}. 
% We introduce a general class of definite-prediction sets that goes beyond the basic `true-implies-false' and `true-implies-true' structures considered thus far. 
(i) We show how the entanglement-assisted advantage in zero-error communication, previously discovered for KS proofs alone, persists for the smaller and experimentally feasible classical channels corresponding to gadgets, under a suitable generalisation. (ii) We apply gadgets to provide an experimentally feasible test of a recent result demonstrating that quantum correlations cannot be reproduced by fundamentally binary theories. 
These are a natural class of alternatives to the set of correlations allowed by quantum theory, and are defined as general probabilistic theories that posit that on a fundamental level only measurements with two outcomes exist. 
(iii) We point out that gadget-based contextuality tests allow to certify the maximal amount of $\log d$ bits of randomness from $d$-dimensional systems, making them ideal candidates for contextuality-based semi-device-independent randomness generation. (iv) We also use gadgets to point out a subtle modification to the famous Cabello-Severini-Winter (CSW) graph-theoretic framework of contextuality, namely that the classical value of non-contextuality inequalities does not always equal the weighted independence number of the corresponding orthogonality graph, when the KS rules of Exclusivity and Completeness are enforced.
Furthermore, we show the constructions of definite prediction vector sets corresponding to arbitrary compound proposition, i.e., the entire spectrum of Hardy tests of contextuality from basic `true-implies-false' sets to KS sets. We identify how these vector sets can be found inside general KS proofs, and demonstrate how these can be applied as building blocks for constructing novel KS proofs as well as general state-independent contextual (\revise{SI-C}) proofs. Consequently, since higher-order gadgets form an essential ingredient in the constructions of KS proofs, minimal constructions of gadgets may be expected to help resolve the long-standing open questions of minimal KS proofs 
%(with the smallest number of atomic propositions) 
in a given Hilbert space dimension.

\section{Results}
Our results presented in this work are in two parts. On the one hand, as the fundamental results, we introduced the order $(m,k)$ gadgets which play a crucial role in contextuality proofs. On the other hand, from the practical point of view, we proved that gadgets serve as the optimal measurement structures for several contextuality applications.

\subsection{Order $(m,k)$ gadgets and forbidden value assignments}
In this work, we first introduced a general class of state-dependent contextuality proofs termed order $(m,k)$ gadgets, which go beyond the known $01$-gadgets \cite{Clifton93,Stairs,hardy1993nonlocality,Belinfante73, RRHP+20,CA96,APSS18}. An order $(m,k)$ gadget contains $m$ mutually non-orthogonal vectors with the property that at most $k$ vectors among them can be assigned value $1$ in any valid $\{0,1\}$-assignment. In Hilbert space of dimension $d$, under some special constructions, we showed that when $m=d,k=d-1$ an order $(m,k)$ gadget can be constructed by any set of arbitrary non-orthogonal vectors $\{|v_1 \rangle, \dots, |v_m \rangle \}$. With this statement and construction, the novel Kochen-Specker (KS) sets as well as the general state-independent contextuality (SI-C) sets in dimension $d$ can be constructed with the order $(k,k-1)$ gadgets as building blocks (for a fixed value $k$ with $2\leq k\leq d$). Apart from this, we also showed that order $(m,k)$ gadgets form the building blocks of every KS proof by identifying them as induced subgraphs within any arbitrary KS proof.

We discussed about a class of even more general measurement structure in which some specific $\{0,1\}$-assignments are forbidden, a result that may be of independent interest and application.
From this point of view, a $01$-gadget is a set of vectors where the assignment $\{(1,1)\}$ is forbidden on two given non-orthogonal vectors, an order $(m,k)$ gadget is a set of vectors and for a given independent set $I = \{v_1, \dots, v_m\}$ of it,  the assignments of the form $\{(\underbrace{1,\dots,1}_{\tiny{k+1}}, f(v_{k+2}),\dots, f(v_m)) \bigwedge \left( \texttt{permutations}\right)\}$ are forbidden ($f$ is any $\{0,1\}$-assignment function).
 And the KS sets are the ones that demonstrate the full forbidden value assignments $\{0,1\}^{|I|}$ on any interdependent set $I$. A natural question then arises - can a gadget be constructed for every forbidden value assignment set? We answered this question in the affirmative and demonstrated the construction process in a concrete step-by-step manner.

\subsection{Gadgets as optimal measurement structures to contextuality applications.}

\textbf{Entanglement-assisted advantage in Zero-error communication.}
Zero-error information theory is one of the most important applications of contextuality. 
Given a single use of a discrete, memoryless channel $\mathcal{N}$, the maximum number of (classical) messages that (a sender) Alice can send to (a receiver) Bob without causing any error is known as the \textit{one-shot zero-error capacity of} $\mathcal{N}$. 
In groundbreaking work, Cubitt, Leung, Matthews and Winter \cite{CLMW10} showed how to use KS proofs (specifically the KS graphs) to construct channels (confusability graphs) for which shared entanglement between Alice and Bob can increase the one-shot zero-error capacity.  In analogy with them, we took the confusability graphs to be the orthogonality graphs of a certain class of gadgets and we 
considered a weighted version of the problem in which we assign weights $w_i$ to the input symbols denoting the desirability of their transmission.
 It is in such a weighted version of the zero-error communication problem, we obtained an enhancement of the one-shot zero-error capacity via shared entanglement for channels corresponding to specific types of gadgets, which is a much wider class of graphs than was previously known. 

\textbf{Large Violations of Binary Consistent Correlations in Quantum Theory.}
While Quantum Theory is the most successful theory ever devised, there is still a huge research effort devoted to understanding physical and information-theoretic principles that force its formalism, one class of them is the Fundamentally Binary theories, these are no-signaling theories in which measurements yield many outcomes are constructed by selecting from binary measurements.  Previously, the authors in \cite{KC16, KVC17} showed a Bell-type inequality to exclude the set of fundamentally binary non-signalling correlations as an underlying mechanism generating the set of quantum correlations, however, the proof is experimentally demanding since only small violations of the derived inequalities are possible, the violation of the derived inequality requires visibilities of $\approx 91.7 \%$ of a suitably prepared two-qutrit state. 
We use the $01$-gadget structures to derive the inequalities which are actually the maximum fractional assignments sum of the distinguished vectors, and from which the genuinely ternary character of quantum measurements can be certified in the much simpler single-system contextuality scenario with arbitrarily large separations between the set of quantum contextual correlations and the binary consistent correlations.

\textbf{Optimal semi-device-independent randomness generation.}
Contextuality can serve as the basis for randomness (or key) generation, and importantly one may utilize gadget-based contextuality tests to certify the optimal amount of randomness $\log_2 d$ per run which is the maximum randomness that can be extracted from a system of dimension $d$. To do that, in general one needs to derive a rigid contextuality test in dimension $d$ and identify a suitable measurement $x^*$ with fully random outcomes $P_{A|X}(a|x^*) = 1/d \quad\forall a \in [d] $ when the maximum quantum value of the contextuality test is observed.
We showed that the general constructions in \cite{RRHP+20} and the tunability of the overlap between the distinguished vectors of the gadgets make them ideal candidates for protocols allowing to certify the maximum amount of $\log_2 d $ bits of randomness. Specifically, we demonstrated the $01$-gadgets in dimension $d=4$ and $5$ with the maximum overlaps between the distinguished vertices being $\frac{1}{\sqrt{d}}$ (for any orthogonal representation in $\mathbb{R}^d$), which indicate that one can readily derive contextuality tests allowing $\log_2 d$ bits optimal randomness certification from these constructions.

\section{Discussion}
 In this paper, we have introduced a generalisation of gadget structures to definite prediction sets for arbitrary
logical propositions and shown how gadgets are optimal measurement structures in many applications of contextuality. 
 A number of interesting open questions remain. A fundamental question is to leverage the constructions of gadgets and their utility in building KS proofs to identify minimal KS proofs in a given dimension. It is still an open question to identify the minimal KS proof in dimension $3$ while the $18$-vector set introduced in \cite{CEG96, Cab08} is conjectured to be minimal in dimension $4$. With regard to applications, it is of interest to construct minimal gadget sets of measurements giving a contextuality test to certify the optimal amount of $\log d$ bits from a system of dimension $d$ (constructions were shown for small dimensions here), and to use them in experimentally feasible contextuality-based randomness generation protocols. It is also of interest to experimentally test the separation between quantum mechanics and general binary consistent theories. In future, it would be interesting to see if gadget generalisations can be used to show separations between quantum correlations and the set of $n$-ary consistent correlations that are defined analogously to the binary theories as composed from measurements yielding at most $n$ outcomes.

\revise{\section{Methods}}

\subsection{Preliminaries}

Much of the reasoning involving outcome contextuality has traditionally been carried out using graph-theoretic representations of KS sets, we therefore begin by establishing some graph-theoretic notation.  

\textbf{Graphs}. In this paper, we deal with simple, undirected, finite graphs $G = (V_G, E_G)$ where $V_G$ and $E_G$ denote the vertex and edge set of the graph respectively. If two vertices $v_i, v_j$ are connected by an edge, we say that they are adjacent and denote it by $v_i \sim v_j$. A clique $C$ in the graph $G$ is a subset of vertices $C \subset V_G$ such that every pair of vertices in $C$ is connected by an edge. A maximal clique is a clique that is not a subset of a larger clique, while a maximum clique in $G$ is a clique of maximum size in $G$, we denote the size of the maximum clique by $\omega(G)$. 
%In general, there can be a number of cliques of maximum size in $G$. A simple albeit inefficient method to identify them is to consider all the $n \choose \omega(G)$ subsets of $V_G$ and check which subsets are cliques, we will use this simple fact later. 
An independent set in a graph $G$ is a subset $I\subseteq V_G$ such that every pair of vertices in $I$ is non-adjacent in $G$, the maximum size of an independent set is denoted $\alpha(G)$. A set of vertices $D\subset V_G$ is said to dominate a clique $C$ if for every $v \in C$ there exists a $w \in D$ such that $\{v,w\}\in E$, that is every vertex of clique $C$ has a neighbour in $D$. The set
$D$ is a minimal dominating set of $C$ if no proper subset of $D$ dominates $C$.

\textbf{Orthogonality graphs and orthogonal representations}. For any set of vectors $\mathcal{V}$, one can define an orthogonality graph $G_{\mathcal{V}}$ as the graph in which vector $|v \rangle \in \mathcal{V}$ is represented by a vertex $v$ in $G_{\mathcal{V}}$ and two vertices $v_1, v_2$ are connected by an edge in $G_{\mathcal{V}}$ if and only if $\langle v_1 | v_2 \rangle = 0$ \cite{Lovasz87}. Checking the orthogonality relations among the vectors from a given set ${\mathcal{V}}$ allows to efficiently establish its orthogonality graph $G_{\mathcal{V}}$. The problem of $\{0,1\}$-coloring of a given set of vectors is then equivalently formulated as the problem of $\{0,1\}$-coloring of its orthogonality graph defined in an analogous way as: 
%It can be also generalized
%to the problem of \zo-coloring of an arbitrary graph:
\begin{definition}
A $\{0,1\}$-coloring of a graph $G$ is a map 
$f:V_G\rightarrow \{0,1\}$ such that (i) for every clique $C$ in $G$, it holds that $\sum_{v \in C} f(v) \leq 1$, and (ii) for every clique $C$ in $G$ of size $\omega(G)$, there exists exactly one vertex $v \in C$ satisfying $f(v)=1$.
\end{definition}
The converse problem of identifying sets of vectors satisfying the orthogonality constraints dictated by the edges of a given graph is the question of finding an orthogonal representation of a graph. 
\begin{definition}
An orthogonal representation of a graph $G$ in dimension $d$ is a set of (unit) vectors $\mathcal{S}$ from $\mathbb{C}^d$ such that there exists a map $f:V_G \mapsto {\mathcal S}$ satisfying the condition that $f(v_1)$ and $f(v_2)$ are orthogonal vectors if $\{v_1, v_2\} \in E_G$. The minimal dimension of an orthogonal representation of $G$ is denoted $d(G)$. A faithful orthogonal representation of a graph $G$ in dimension $d$ is a set of (unit) vectors $\mathcal{S}$ from $\mathbb{C}^d$ such that there exists a map $f:V_G \mapsto \mathcal{S}$ satisfying the condition that $f(v_1)$ and $f(v_2)$ are orthogonal vectors if and only if $\{v_1, v_2\} \in E_G$, and furthermore $f(u)$ and $f(v)$ are non-parallel vectors if $u \neq v$. The minimal dimension of a faithful orthogonal representation if $G$ is denoted $d^*(G)$.

%if $uv\in E(G)$, then the vectors $f(u), f(v)$ are orthogonal. Minimal dimension $d$ with this property is denoted as $d(G)$ and we say that $G$ has dimension $d(G)$.

%A faithful orthogonal representation is any set of vectors from ${\mathcal C}^d$, for which there exists a map $f:V\mapsto {\mathcal S}$ such that $f(u), f(v)$ are orthogonal if and only if $uv\in E(G)$, and additionally 
%$u\neq v \iff f(u) \neq f(v)$. Minimal dimension $d$ of a faithful representation is denoted as $d^*(G)$ and we say that $G$ has faithful dimension $d^*(G)$.
\end{definition}

We recall here the notion of $01$-gadgets formalised in  \cite{RRHP+20}.
%the notion of 
%$01$-gadgets has been introduced. We recall
%here their definitions in terms of the set of vectors, and a one, equivalent in terms of a graph.
\begin{definition} \cite{RRHP+20}
A $01$-gadget in dimension $d$ is a \zo-colorable set ${\mathcal S}_{gad}\subset \mathbb{C}^{d}$ of vectors containing two
distinguished non-orthogonal vectors $|u\>$ and $|v\>$ that nevertheless satisfy $f(u)+f(v)\leq 1$ in every \zo-coloring $f$ of ${\mathcal S}_{gad}$.
\end{definition}
Equivalently, the $01$-gadgets may be defined in graph-theoretic terms as: 
%the alternative, equivalent definition reads:
\begin{definition} \cite{RRHP+20}
A $01$-gadget in dimension $d$ is a \zo-colorable   graph $G_{gad}$ with faithful
dimension $d^*(G_{gad})=\omega(G_{gad})=d$ and 
with two distinguished non-adjacent vertices $u$ and $v$ such that $f(u)+f(v)\leq 1$ in every \zo-coloring f of $G_{gad}$.
\end{definition}
In other words, $01$-gadgets are particular definite-prediction sets with a logical implication of the form $P \rightarrow \overline{Q}$, i.e., in any logical assignment of the set of atomic propositions, when one of the two distinguished propositions is assigned the value True the other is necessarily assigned value False, even though the distinguished atomic propositions are not represented by orthogonal vectors and are therefore not inherently exclusive to each other. In \cite{RRHP+20}, it was shown that $01$-gadgets identify the essential contradiction captured by the Kochen-Specker theorem, in that every KS graph contains a $01$-gadget and from every $01$-gadget one can construct a proof of the Kochen-Specker theorem (see also \cite{AM78, AM80}). 
 Note that by the famous Erd\H{o}s-Stone theorem \cite{Erdos-Stone} of extremal graph theory, graphs of sufficiently high density necessarily contain subgraphs isomorphic to $01$-gadgets, specifically the maximum number of edges in a graph (with faithful dimension $d$) with $n$ vertices not containing a subgraph isomorphic to a $01$-gadget (of dimension $d$) is $\left[\frac{d-2}{d-1} + o(1) \right] \binom{n}{2}$. This can be seen by observing that $01$-gadgets in dimension $d$ have chromatic number $\chi(G) = d$, where the chromatic number of a graph denotes the minimum number of colors needed to color the vertices such that adjacent vertices are assigned distinct colors.

\textbf{Kochen-Specker sets, $01$-gadgets and Satisfiability}. From the preceding discussion, we recognize that the orthogonality graphs of Kochen-Specker vector sets do not admit a $\{0,1\}$-coloring. The $\{0,1\}$-colorability of a graph can also be formulated as a SAT instance and solved using a solver such as MiniSAT (http://minisat.se). To do this, one introduces a variable for each vertex in the graph. For each edge in the graph, a clause is added stating that the two incident vertices cannot both have value $1$ (True). For each maximum clique in the graph, a clause is added stating that not all vertices in the clique have value $0$ (False). The Boolean formulas for KS graphs is then seen to be unsatisfiable. Specifically, for the dimension $d=3$ setting, one can formulate the $\{0,1\}$-colorability of KS graphs as a 1-in-3 SAT instance. To do this, we complete the bases in the KS set by adding appropriate (unique) vectors, such that each edge in the graph belongs to a triangle. The Boolean formula in conjunctive normal form then has exactly three literals per clause, i.e., the formula is of the form $\bigwedge_{(i_1,i_2,i_3) \in \text{Cliques}}(v_{i_1} \vee v_{i_2} \vee v_{i_3})$ and the $\{0,1\}$-colorability is equivalent to the 1-in-3 SAT question of determining whether there exists a truth assignment to the variables so that each clause has \textit{exactly} one true literal. Furthermore, one can also obtain a similar unsatisfiable formula for $01$-gadgets with an added clause stating that the two distinguished non-adjacent vertices both have value $1$. Thus, from the point of view of satisfiability, $01$-gadgets provide a similar (and in many cases, smaller) unsatisfiable instance. From the point of view of contextuality, $01$-gadgets provide a state-dependent version of Kochen-Specker contextuality.  
   
\subsection{Order $(m,k)$ gadgets}
Let us now consider generalisations of gadget measurement structures that go beyond the basic `true-implies-false' and `true-implies-true' logical implications. Our first generalisation is to gadgets of order $(m,k)$ with $k \leq m$. Essentially, these are prediction sets corresponding to the proposition $\left[\left(\bigwedge_{i=1}^k P_i \rightarrow \bigwedge_{j=k+1}^{m} \overline{P}_j \right) \bigwedge \left( \texttt{permutations} \right)\right]$ for $m$ mutually non-exclusive atomic propositions $P_1, \dots, P_m$. In other words, the gadgets of order $(m,k)$ contain $m$ mutually non-orthogonal vectors such that at most $k$ vectors can be assigned value $1$ in any $\{0,1\}$-coloring. The $01$-gadgets \cite{RRHP+20, Clifton93, KS90, Greechie-1, Pitowsky-1, Pitowsky-2}  then correspond to the special case of gadgets of order $(2,1)$. 
\begin{definition}
A gadget of order $(m,k)$ in dimension $d$ is a \revise{\zo-assignable} set of vectors ${\mathcal S}_{m,k} \subset \mathbb{C}^{d}$ containing $m$ distinguished mutually non-orthogonal vectors \revise{ $\widetilde{{\mathcal S}}_{m,k} = \{|v_1\>,...,|v_m\>\}\subsetneq {\mathcal S}_{m,k} $}, such that 
%for any valid \zo-coloring at most $k$ of them can be attributed value 1 simultaneously and any $k$ of them can be attributed value 1 simultaneously by a valid \zo-coloring.
\begin{itemize} 
    \item for every subset \revise{$\mathcal{R} \subset \widetilde{{\mathcal S}}_{m,k}$} of size \revise{smaller than or equal to $k$}, there exists a \zo-coloring which attributes $1$ to all vectors in $\mathcal{R}$, and
    \item for any subset \revise{$\mathcal{R} \subset \widetilde{\mathcal S}_{m,k}$} of size greater than $k$,  
    no $\{0,1\}$-coloring exists that attributes $1$ to all vectors in $\mathcal{R}$.
\end{itemize}
\end{definition}

One can also give an equivalent definition of the order $(m,k)$ gadget in graph-theoretic terms
%It will be also useful to define the higher order gadget
%in graph-theoretical terms.
\begin{definition}
A gadget of order $(m,k)$ in dimension $d$ is a $\{0,1\}$-colorable graph $G$ with faithful
dimension $d^*(G_{gad})=\omega(G_{gad})=d$ and with a distinguished independent set $I$ of cardinality $|I| = m$ such that
\begin{itemize}
    \item for every subset $I' \subset I$ of cardinality \revise{$|I'| \leq k$}, there exists a $\{0,1\}$-coloring of $G$ in which all $v \in I'$ are assigned value $1$, and
    \item no $\{0,1\}$-coloring of $G$ exists that assigns value $1$ to more than $k$ vertices from $I$. 
\end{itemize}
%$m$ distinguished mutually non-adjacent  vertices $V'=\{v_1,...,v_d\}\subsetneq V$, such that
%(i) for every subset $V''\subseteq V'$ of cardinality $|V''|\leq k$ there exists $01$-coloring of $G$ such that all $v_i\in V''$ are mapped to $1$ in this coloring. 
%(ii) there does not exist $01$-coloring of $G$
%such that more than $k$ vertices from $V'$ are mapped to $1$.
\label{def:hog}
\end{definition}

We first study the question whether a higher order $(m,k)$ gadget can be constructed with any set of arbitrary vectors $\{|v_1 \rangle, \dots, |v_m \rangle \}$ as the distinguished vectors. While it is possible to consider every value of $k \in [m-1]$, here we focus on the construction for the special case $m=d, k = d-1$ .
%and show the application of such a construction in the following section. 
As in the construction of KS sets, the construction of such general gadgets is complicated by the fact that even deciding the $\{0,1\}$-colorability of a general graph is an NP-complete problem \cite{Arends09}. It is also hard in general to derive the faithful orthogonal representation of a graph in a given dimension. As such, there isn't a systematic method to derive minimal gadget structures. Nevertheless, we propose specific graphs $G$ with candidate vertices to play the role of the distinguished vertices of the gadget. We then construct a symmetric matrix $\texttt{Gram}$ with entries $\texttt{Gram}_{i,j} = \texttt{Gram}_{j,i} = 0$ corresponding to edges $(i,j)$ in $G$. The matrix $\texttt{Gram}$ is meant to represent the Gram matrix of a set of vectors realising the graph $G$ so that $\texttt{Gram}_{i,j} = \langle v_i | v_j \rangle$. We study the question of finding a positive-semi-definite matrix completion $\texttt{Gram} \succeq 0$ with a rank-$d$ constraint.  
%In the Supplemental Material,
%Appendix \ref{speical_gagdet}, 
We thus exhibit a graph that serves as an order $(d,d-1)$ gadget for arbitrary $d$, with the $d$ distinguished vectors being $|m_1 \rangle, |m_2 \rangle,\cdots,|m_{d} \rangle$ (details are in the \revise{Supplementary Information Note 1}).
The feature of this construction is that the distinguished vectors can be chosen to be arbitrary close to each other, i.e., $\langle m_i | m_j \rangle \rightarrow 1$ as the number of repeating units increases.

We now show an application of the higher-order gadgets in constructing novel KS proofs as well as general state-independent contextual (\revise{SI-C}) proofs and also defer the details of these contractions to \revise{Supplementary Information Note 2.}
%that the construction in dimension $d$ of order $(k,k-1)$ gadgets (for any fixed value $k$ with $2 \leq k \leq d$) can then be used to efficiently build novel KS proofs as well as general state-independent contextual (SIC) proofs of the Yu-Oh type. 
%Fix a value of $k$ in the range $\{2,\dots, d\}$. 
\begin{construction}\label{cos_KS}
Order $(k,k-1)$ gadgets can be used as building blocks to construct KS proofs in dimension $d$.
\end{construction}
In the construction, we start with $k$ bases $B_1, B_2, \dots, B_k$ in dimension $d$, then randomly pick one vector in each basis to form a set $S_i = \big\{ |v_{B_p}^q \rangle \big\}$ with $p \in [k] := \{1,\dots,k\}$ and $q \in [d]$. In total, we have $d^k$ such sets $S_i$. Then for each $i \in [d^k]$, we construct an order $(k,k-1)$ gadget in dimension $d$ with the vectors in $S_i$ being the distinguished vectors. Thus, assigning a single value $1$ to each of the bases $B_1, \dots, B_{k-1}$ forces all the vectors in the basis $B_k$ to be assigned value $0$ giving a contradiction, so that the union of all vectors is a KS proof. 
% We defer the details of this contraction of Kochen Specker proofs to Supplemental Material.
%ppendix \ref{App:SIC-const}.
\begin{construction}
Order $(k,k-1)$ gadgets can be used as building blocks to construct general \revise{SI-C} sets in dimension $d$. 
\end{construction}
To realize the general \revise{SI-C} set, we first construct a set of $r \cdot 2^n$ distinct unit vectors $|u_i \rangle$ in dimension $d$ satisfying
$\sum_{i=1}^{r\cdot 2^n}|u_{i}\rangle\langle u_{i}|=\frac{r\cdot 2^n}{d} \mathbb{1}_d$, where $r> \max \left\{\frac{d(k-1)}{2^n}, 4 \right\}$ is an even integer and $n = \begin{cases} 
      \lceil\log_2 \frac{d-1}{2}\rceil, & $d$ \; \text{is odd} \\
      \lceil\log_2 \frac{d-2}{2}\rceil, & $d$ \; \text{is even}
   \end{cases}$. Then any $k$ of these vectors form a set $S_i$, we first delete all the mutually orthogonal vectors in the set $S_i$ and construct an order $(|S_i|, |S_i|-1)$ gadget in dimension $d$ with the vectors in $S_i$ being the distinguished vectors. As a result, in any $\{0,1\}$-assignment $f$, the sum of assignments of these $r \cdot 2^n$ vectors is smaller than $k$. On the other hand, in quantum theory we obtain the value $\frac{r\cdot 2^n}{d} > k$ for every state in dimension $d$, so that the union of all the vectors gives a proof of state-independent contextuality.
% \begin{theorem}\label{state_ind}
% Set 
% %\begin{equation}
% $n = \begin{cases} 
%       \lceil\log_2 \frac{d-1}{2}\rceil, & $d$ \; \text{is odd} \\
%       \lceil\log_2 \frac{d-2}{2}\rceil, & $d$ \; \text{is even}
%   \end{cases}.$
% %\end{equation}
% Let $r \geq 4$ be an even integer. There exist $r \cdot 2^n$ distinct unit vectors $|u_i \rangle$ in dimension $d$ satisfying
% \begin{equation}
% \sum_{i=1}^{r\cdot 2^n}|u_{i}\rangle\langle u_{i}|=\frac{r\cdot 2^n}{d} \mathbb{1}_d
% \end{equation}
% \end{theorem}
% We defer the construction of such a set of vectors and the proof of this theorem, and details of the SIC sets construction to Supplemental Material.
% %Appendix \ref{App:SIC-const}.
%In \cite{RRHP+20}, 
%we identified the importance of the order $(2,1)$ gadgets (termed $01$-gadgets or `true-implies-false' sets or bugs in the literature) in Kochen-Specker proofs. Specifically, 
%we showed how  the order $(2,1)$ gadgets capture the essential contradiction necessary to prove the KS theorem. 
%i.e., every Kochen-Specker graph contains such an order $(2,1)$ gadget and using every $(2,1)$ gadget one can build a proof of the Kochen-Specker theorem. 
Finally, not only can the higher-order gadgets be used as building blocks to construct KS proofs, we also show that specific such gadgets may be found as necessary substructures (induced subgraphs) in any proof of the KS theorem.
%A natural question arises as to the role of the higher-order gadgets introduced in this paper in proofs of the Kochen-Specker theorem. 
%In the following, we show how specific higher-order gadgets may also be found as substructures (induced subgraphs) in any proof of the Kochen-Specker theorem. 
\begin{theorem}\label{thm:kk-1gadget_in_KS}
Every KS set in dimension $d$ contains a gadget of order $(k,k-1)$ for some $k$ satisfying $2\leq k \leq d$.
\end{theorem}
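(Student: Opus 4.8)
The plan is to extract the gadget from the KS graph $G$ by isolating a minimal obstruction to $\{0,1\}$-colorability and reading off the order $k$ directly from it. Write $G=(V_G,E_G)$ for the orthogonality graph of the KS set, so that $\omega(G)=d$, $G$ admits a faithful orthogonal representation in dimension $d$, and $G$ has no $\{0,1\}$-coloring. The base case $k=2$ is already implied by the result of \cite{RRHP+20} that every KS graph contains an induced $01$-gadget, since a $01$-gadget is by Definition~\ref{def:hog} exactly a gadget of order $(2,1)$; the content of the theorem is to show that one extraction procedure works uniformly and produces a genuine order $(k,k-1)$ structure for a well-defined $k$ in the stated range.

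First I would pass to a vertex-critical non-colorable induced subgraph $G^{\ast}\subseteq G$: one that admits no $\{0,1\}$-coloring but for which every proper induced subgraph does. Such a $G^{\ast}$ exists because $G$ itself is non-colorable while single vertices are trivially colorable, so one may delete vertices one at a time until colorability is about to be restored. Deleting any vertex $v$ from $G^{\ast}$ yields a colorable graph $H := G^{\ast}-v$, and the failure of every coloring of $H$ to extend to $G^{\ast}$ is forced by the exclusivity and completeness constraints involving $v$ — concretely, by the maximum cliques (bases of size $d$) through $v$ together with the neighbours of $v$ pinned to value $1$. This gives a colorable induced subgraph $H$ together with a localized obstruction to be converted into the gadget property.

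Next I would identify the distinguished independent set $I$ inside $H$ and read off $k$. The obstruction at $v$ says that a certain independent set of ``alternative'' vertices of $H$ cannot be simultaneously assigned $1$ in any valid coloring of $H$; I would take $I$ to be a \emph{minimal} such blocking independent set and set $k := |I|$. The two conditions of Definition~\ref{def:hog} then follow: no valid coloring of $H$ assigns $1$ to all of $I$ (it is blocking), whereas every proper subset of $I$ admits a valid coloring of $H$ assigning it all $1$'s (by minimality), and since restricting such a coloring still assigns $1$ to every smaller subset, all subsets of size $\le k-1$ are simultaneously $1$-colorable. This exhibits $H$, with distinguished set $I$, as a gadget of order $(k,k-1)$.

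The remaining bookkeeping, which I expect to be the main obstacle, is twofold. One must ensure $H$ has the correct faithful dimension $d^{\ast}(H)=\omega(H)=d$: colorability and the obstruction are inherited automatically, but one has to argue that restricting the faithful orthogonal representation of $G$ to the vertices of $H$ still realises $H$ faithfully in dimension $d$ (retaining a maximum clique of size $d$ and introducing no accidental orthogonalities or parallelisms), possibly after adding back a small number of completion vectors. Second, one must pin down the range $2\le k\le d$: the lower bound $k\ge 2$ holds because a single distinguished vertex is always freely colorable, so any genuine minimal obstruction involves at least two alternatives; the upper bound $k\le d$ follows because the forced chain producing the obstruction propagates through the maximum cliques, each of size $d$, so the minimal blocking independent set of alternatives cannot exceed $d$ in cardinality. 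Identifying this blocking independent set precisely from the forcing structure at $v$, and confirming that the completeness constraint on maximum cliques (which is not manifestly monotone) does indeed yield the clean threshold at $k-1$, is the technical heart of the argument.
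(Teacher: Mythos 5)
Your proposal does not close the step you yourself flag as ``the technical heart,'' and that step is precisely where the proof lives. The paper never passes to a vertex-critical subgraph: it runs a greedy clique-by-clique coloring attempt on the full KS graph and extracts the gadget from the point of failure. When the algorithm gets stuck at a maximum clique $C_j$, the previously $1$-assigned vertices form an independent set $D$ that \emph{dominates} $C_j$; taking a minimal dominating subset $D' \subseteq D$, the induced subgraph on $D' \cup V_{C_j}$ is the gadget, with $k = |D'|$. Anchoring the distinguished set to a single size-$d$ clique is what makes all four required facts fall out at once: blocking holds because if every vertex of $D'$ were assigned $1$, then $C_j$ could contain no $1$, violating Completeness; proper subsets are non-blocking because minimal domination leaves some vertex of $C_j$ free to take value $1$; $k \geq 2$ because a single vertex dominating $C_j$ would form a clique of size $d+1$, contradicting $\omega(G)=d$; and $k \leq d$ because a minimal dominating set of a $d$-clique has at most $d$ members. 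In your write-up none of these is actually derived. You posit a ``minimal blocking independent set'' $I$ in $H = G^{\ast} - v$, but you never show one exists: the obstruction at $v$ is the conjunction, over all colorings $f$ of $H$, of ``some neighbour of $v$ has $f=1$'' and ``some maximum clique through $v$ is all-$0$ outside $v$,'' and this covering-type condition does not by itself yield any independent set that can never be simultaneously assigned $1$. (It can be massaged into one --- take a coloring of $H$ whose $1$-set is inclusion-maximal, observe that this $1$-set then dominates an all-$0$ clique through $v$ as well as $v$ itself, and pass to a minimal dominating subset --- but that reconstructs exactly the paper's clique-domination argument, which is absent from your proposal.)

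Two of your side arguments are also incorrect or unresolved as stated. Your lower bound rests on ``a single distinguished vertex is always freely colorable,'' which is false: there exist $\{0,1\}$-colorable graphs in which a specific vertex is forced to $0$ in every coloring (structures of exactly this kind are built in \cite{RRHP+20}), so a minimal blocking set could a priori have size $1$; the correct argument is the clique-size contradiction above. Your upper bound (``the forced chain propagates through maximum cliques, so the blocking set cannot exceed $d$'') is an assertion, not a proof --- without tying $I$ to a single clique there is no reason a minimal blocking independent set has at most $d$ elements. Finally, the criticality reduction itself is delicate in a way you acknowledge but do not repair: $\{0,1\}$-colorability is not monotone under vertex deletion, because Completeness is imposed on cliques of size $\omega$ of the \emph{current} graph, and both $\omega$ and the faithful-dimension requirement $d^{\ast}(H)=\omega(H)=d$ can change when vertices are removed. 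The paper's proof sidesteps all of this by never deleting vertices at all.
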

The intuition behind the proof is that if no $\{0,1\}$-coloring exists for a graph $G$, a brute-force greedy algorithm that attempts to assign $0$s and $1$s to its vertices must stop at some point in its execution, before each maximum clique has a single $1$-valued vertex. Therefore, there must exist some clique $C$ in $G$ such that each vertex in $C$ is adjacent to some $1$-valued vertex at this point in the execution. Call such a minimal set of adjacent vertices to a maximum clique as $D$, then the induced subgraph formed by $D \cup C$ constitutes a gadget. Furthermore, such a gadget must be of order at least $(k,k-1)$. 
%By Theorem \ref{thm:kk-1gadget_in_KS}, and the fact that $\omega(G) = d$ for a KS set in dimension $d$, it follows that every KS set in dimension $d$ contains a gadget of order $(k,k-1)$ for some $k \in [2, d]$. 

Order $(m,k)$ gadgets are thus a natural generalisation of 'True-implies-False' sets, where we consider an independent set $I = \{v_1, \dots, v_m\}$ of vertices (mutually non-orthogonal vectors) with forbidden value assignments of the form $\{(\underbrace{1,\dots,1}_{\tiny{k+1}}, f(v_{k+2}),\dots, f(v_m)\}$ and permutations thereof, for any $\{0,1\}$-coloring $f$. One may consider yet more general structures in which we specify a general set of forbidden value assignments $\mathcal{H} \subset \{0,1\}^m$, we elaborate on this in \revise{Supplementary Information Note 3.}

\subsection{Entanglement-assisted advantage in Zero-error communication in channels constructed from gadgets}
One of the most important and tantalizing applications of contextuality is in the field of zero-error information theory. In classical zero-error coding, we consider a discrete, memoryless channel $\mathcal{N}$ connecting a sender Alice and receiver Bob. Given a single use of such a channel, the maximum number of classical messages that Alice can send to Bob under the constraint that there be no error is known as the \textit{one-shot zero-error capacity} of $\mathcal{N}$. In groundbreaking work, Cubitt et al. \cite{CLMW10} showed how to use KS proofs to construct channels for which shared entanglement between Alice and Bob can increase the one-shot zero-error capacity. Since $01$-gadgets are substructures of KS proofs, it is an interesting question to investigate whether these smaller (and experimentally more feasible) measurement structures already exhibit the phenomenon of entanglement-assisted advantage in zero-error capacity. 

The classical channel $\mathcal{N}$ has  finite inputs $X$ and outputs $Y$ and its behavior is characterized by the probability distribution $\mathcal{N}_{Y|x}(y|x)$ of outputs given inputs. Two inputs $x$ are confusable if the corresponding distributions on outputs overlap. The confusability graph $G(\mathcal{N})$ is constructed with vertex set being the set of input symbols and two vertices connected by an edge if the corresponding input symbols are confusable. A zero-error code is then a set of non-confusable inputs and the one-shot zero-error capacity of the channel is  the maximum size of such a set. When Alice and Bob only share correlations which can be obtained using shared randomness, this number can be readily seen to be the independence number of the graph $G(\mathcal{N})$, i.e., $c_{SR}(\mathcal{N}) = \alpha\left(G(\mathcal{N})\right)$ where $c_{SR}(\mathcal{N})$ denotes the zero-error capacity when using correlations obtained using shared randomness as a resource. On the other hand, Cubitt et al. showed examples of channels for which sharing entanglement can improve the zero-error capacity of sending classical messages, i.e., such that $c_{SE}(\mathcal{N}) > c_{SR}(\mathcal{N})$ where $c_{SE}(\mathcal{N})$ denotes the zero-error capacity when using shared entanglement as a resource. In particular, they showed that such channels arise naturally from proofs of the Kochen-Specker theorem, specifically one may take $G(\mathcal{N})$ to be the (non-$\{0,1\}$-colorable) orthogonality graph of some Kochen-Specker vector set.  

We show that one may also take $G(\mathcal{N})$ to be the orthogonality graph of a certain class of gadgets, by a suitable generalisation to a weighted version of the zero-error communication problem.
In the weighted generalisation, we assign weights $w = \{w_i\}_{i=1}^{|V|}$ to the input symbols (denoting the desirability of their transmission). 
%While the zero-error code still remains a set of non-confusable inputs, 
The one-shot zero-error capacity is then the maximum total weight of any set of non-confusable inputs,
%In this case, the one-shot zero-error capacity of the channel is
which corresponds to the weighted independence number of the confusability graph, i.e., $c_{SR}(\mathcal{N}, w) = \alpha\left(G(\mathcal{N}), w \right)$. 

Consider a gadget in which we complete each of the bases (by addition of suitable vectors satisfying the orthogonality relations) such that a clique cover of the graph is possible in which the vertices of the graph are partitioned into $q$ maximum cliques (of size $\omega(G) = d$) given as $C_m = \{v_{m,1}, \dots, v_{m,d}\}$ for $m=1,\dots, q$ (i.e., $V = \cup_{m=1}^q C_m$). 
%An example of such a completion for a gadget is shown in Fig. \ref{}. 
We remark that a similar completion is required for the graphs obtained from Kochen-Specker proofs in \cite{CLMW10}, and only such Kochen-Specker proofs (such as the Peres-Mermin proof \cite{Mermin} with $24$ vectors partitioned into six cliques in dimension $4$) display the enhancement proven there. 

We construct the channel $\mathcal{N}$ as having inputs in $[q] \times [d]$ with inputs $(m,i)$ and $(m',i')$ being confusable if and only if the corresponding vectors are orthogonal to each other, i.e., if and only if $\langle v_{m,i} | v_{m',i'} \rangle = 0$. $G(\mathcal{N})$ has an edge between such pairs of confusable inputs and is exactly the orthogonality graph corresponding to the (base-completed) gadget. By construction, the vertices of $G(\mathcal{N})$ can be partitioned into $q$ maximum cliques (of size $d$). We now consider the weighted version of the zero-error communication problem with $V_{\text{dist}}$ denoting the set of distinguished vertices in the gadget as 

\begin{equation} 
\label{eq:weight-zec}
w_i = \left\{ \begin{array}{ll} 
      w^* & i \in V_{\text{dist}} \\
      1 & i \in V \setminus V_{\text{dist}} \\
   \end{array}
\right. 
\end{equation}
for a parameter $w^*$. The one-shot zero-error capacity when only shared randomness is available is then readily calculated to be $c_{SR}\left(G(\mathcal{N})\right) = \max\left\{ \alpha\left(G(\mathcal{N})\right) - 1 + w^*, \alpha\left(G(\mathcal{N})\right) - 3 + 2w^* \right\}$. We choose $w^*>1$ such that $2w^*-3 < w^*-1$, i.e., $1< w^* < 2$ giving $c_{SR}\left(G(\mathcal{N})\right) = \alpha\left(G(\mathcal{N})\right) - 1 + w^* < q +w^* - 1$. 

On the other hand, suppose Alice and Bob share a maximally entangled state $|\psi_d \rangle = \frac{1}{\sqrt{d}} \sum_{i=1}^{d} | i, i \rangle$. Each message $m$ that Alice wishes to send corresponds to a maximum clique in the aforementioned clique partitioning of the graph $G(\mathcal{N})$. To send $m$, Alice measures in the bases given by the clique $C_m$ and obtains an outcome $k \in [d]$ with probability $1/d$. Her input to the channel is then $(m,k)$. The output of the channel at Bob's end is one of the maximum cliques containing the vertex $v_{m,k}$ (not necessarily belonging to the clique partitioning of the graph). Bob performs a projective measurement corresponding to his received maximum clique, and his outcome reveals Alice's input to the channel. The one-shot zero-error capacity when shared entanglement is used as a resource is then calculated to be $c_{SE}\left(G(\mathcal{N})\right) = \frac{1}{d} \left[ q d - |V_{\text{dist}}| + |V_{\text{dist}}| \cdot  w^* \right] = q + \frac{(w^*-1)|V_{\text{dist}}|}{d}$. We see that $c_{SE}\left(G(\mathcal{N})\right) > c_{SR}\left(G(\mathcal{N})\right)$ whenever $|V_{\text{dist}}| > d$, i.e., whenever we have a gadget-type graph with $|V_{\text{dist}}|$ distinguished vertices of which only one can be assigned value $1$ in any non-contextual $\{0,1\}$ value assignment. 

We note that such a gadget-type graph does not correspond to a Kochen-Specker proof since it is $\{0,1\}$-colorable. On the other hand, one can construct a state-independent non-contextuality inequality for the graph that is violated by all states in dimension $d$, namely
%\begin{equation}
    $\sum_{v_i \in V_{\text{dist}}} P(e_{v_i}) \leq 1$,
%\end{equation}
where $P(e_i)$ refers to the probability of the event $e_{v_i}$ corresponding to the distinguished vertex $v_i$. Such graphs may therefore be said to be of the type discovered by Yu and Oh in \cite{YO12}, namely they exhibit state-independent contextuality despite not corresponding to a Kochen-Specker proof. And as we have seen, we obtain an enhancement via entanglement of the one-shot zero-error capacity for all such graphs, a much wider (and easily constructable following the constructions in \cite{RRHP+20} and Construction 2 in this work) class of graphs than was previously known. 

\subsection{Large Violations of Binary Consistent Correlations in Quantum Theory}

In this section, we describe a novel application of the gadget constructions to the task of excluding a natural alternative to quantum theory, namely the so-called "Fundamentally Binary theories" \cite{KC16, KVC17}. While Quantum Theory is the most successful theory ever devised, there is still a huge research effort devoted to understanding physical and information-theoretic principles that force its formalism. Seemingly natural alternatives to the set of correlations allowed by Quantum Theory exist such as the so-called `Almost Quantum' correlation set \cite{Navasc15}. Another class of natural alternatives is given by the Fundamentally Binary theories, these are no-signalling theories in which measurements yielding many outcomes are constructed by selecting from binary measurements. In other words, these theories posit that on a fundamental level only measurements with two outcomes exist, and scenarios where a measurement has more than two outcomes are achieved by classical post-processing of one or more two-outcome measurements. Fundamentally binary correlations are characterised as the convex hull of all consistent correlations $\{ P(a|x)\}$ obeying the constraint that for all $x$, it holds that $P(a|x) = 0$ for all but two outcomes $a$.

In \cite{KC16, KVC17}, it was shown that two-party non-locality scenarios exist such that the corresponding class of fundamentally binary non-signalling correlations does not fully encompass the set of quantum correlation. In other words, it was shown that a Bell-type inequality can be constructed to exclude the set of fundamentally binary non-signalling correlations as an underlying mechanism generating the set of quantum correlations. The authors of \cite{KC16, KVC17} considered simplest non-trivial polytope of fundamentally binary non-signalling correlations involving two parties that perform two measurements with three outcomes each. They computed the facets of the polytope using Fourier-Motzkin elimination using the software $\texttt{porta}$ and calculated the corresponding quantum violations using the NPA semidefinite programming hierarchy \cite{navascues2008convergent}. While an important foundational result, the proof in \cite{KVC17} is experimentally demanding in that only small violations of the derived inequalities are possible (the quantum value being $I_a = 2(2/3)^{3/2} \approx 1.0887$ compared to the value in binary theories of $I_a = 1$), the violation of the derived inequality requires visibilities of $\approx 91.7 \%$ of a suitably prepared two-qutrit state. In this section, we show that the genuinely ternary character of quantum measurements can be certified in the much simpler single-system contextuality scenario with arbitrarily large separations between the set of quantum contextual correlations and the binary consistent correlations. The price to pay for such large violations is the assumption, common to all contextuality experiments, that the same projector is measured in different contexts. 

Consider an orthogonality graph $G = (V_G, E_G)$ with a set of maximum cliques (contexts) $\mathcal{C} = \{A_1, \dots, A_k\}$ where each clique $A_i$ is of size $\omega(G) = d$. A box $B = \{P(a|x)\}$ is a set of conditional probability distributions with input $x \in \{1, \dots, k\}$ and output $a \in \{1, \dots, d\}$. A box is said to be compatible with an orthogonality graph $G$ if it is a family of (normalized) probability distributions such that for each $c \in \{A_1, \dots, A_k\}$, there is a corresponding probability distribution in this family. 
\begin{definition}
For a given orthogonality graph $G = (V_G, E_G)$ with a set of contexts $\mathcal{C}_G = \{A_1, \dots, A_k\}$, a box $B = \{P(a|x\}$ is said to be a Consistent Box if for all pairs $c, c' \in \mathcal{C}_G$ and for sets of vertices (projectors) $S_{c,c'} = c \cap c' \neq \emptyset$, it holds that

\begin{equation}
 \forall s \in S_{c,c'} \; \; \; \;   P(a = s | x = c) = P(a = s | x = c'). 
\end{equation}
The set of all consistent boxes $B$ compatible with an orthogonality graph $G$ is denoted by $\mathtt{B}^{c}_G$. 
\end{definition}
Note that the set of non-signalling boxes is a special case of such consistent boxes. 

Fundamentally binary correlations are a sub-class of consistent correlations obtained as the convex hull of consistent boxes for which for each context $c$ in the graph $G$ (maximum clique of size $\omega(G) = d$) at most two projectors (vertices in the clique) are assigned non-zero values that sum to unity and the remaining projectors in the context are assigned value $0$, together with any box obtained by local classical postprocessing of such boxes. Note that in each extremal binary consistent box, the assignment of values to the projectors is done in a consistent manner, so that the value assigned to any projector is independent of the context in which it is measured. Formally we define binary consistent correlations as follows.
\begin{definition}
For a given orthogonality graph $G = (V_G, E_G)$ with a set of contexts $\mathcal{C}_G = \{A_1, \dots, A_k\}$, a binary consistent assignment is a function $f : V_G \rightarrow [0,1]$ such that $\forall c \in \mathcal{C}_G$, $exists \; v_1, v_2 \in c$ such that $f(v_1) + f(v_2) = 1$ and $f(v_i) = 0$ for all $v_i \in c \setminus \{v_1, v_2\}$. Define the set of boxes $\mathtt{B}^{\text{bin-cons}}_G$ as the convex hull of boxes obtained by binary consistent assignments, i.e., 
%\begin{widetext}
\begin{eqnarray}
\mathtt{B}^{\text{bin-cons}}_G := &&\text{conv}\bigg\{ \{P(a|x)\} \in \mathtt{B}^{c}_G \; | \; \forall c \in \mathcal{C}_G, \; \exists s_1, s_2 \in c \; \nonumber \\
&&\text{s.t.} \; P(a = s_1 | x = c) + P(a = s_2 | x = c) = 1 \bigg\}.
\end{eqnarray}
%\end{widetext}
The set of Fundamentally Binary boxes $\mathtt{B}^{\text{bin}}_G$ is defined as the set of boxes that can be obtained by local classical postprocessing from any $B \in \mathtt{B}^{\text{bin-cons}}_G$. 
\end{definition}

We now show that not only does the set of Fundamentally Binary boxes not encompass the set of quantum contextual correlations, but that in fact there exist separating inequalities for which large violations by quantum contextual correlations can be obtained. 
%To do so, we will use constructions of 'extended $01$-gadgets' that some of us introduced in \cite{}. 
%\begin{definition}
%An extended $01$-gadget in dimension $d$ is a $\{0,1\}$-colorable graph $G_{xgad} = (V_{xgad}, E_{xgad})$ with faithful dimension $d^*(G_{xgad}) = \omega(G_{xgad}) = d$ and with two distinguished non-adjacent vertices $v_1 \nsim v_2$ such that in any assignment $f : V_{xgad} \rightarrow [0,1]$, it holds that $f(v_1) + f(v_2) < 2$. 
%\end{definition}
%We make use of the extended $01$-gadgets to construct separating inequalities of the set of fundamentally binary correlations which admit large quantum violations.

\begin{theorem}
There exist inequalities bounding the set of fundamentally binary consistent correlations that admit close to algebraic violations in quantum theory.
\end{theorem}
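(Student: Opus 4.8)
The plan is to read the separating inequality directly off the distinguished vectors of a higher-order gadget whose mutual overlaps can be tuned toward unity, and then to contrast the fractional (binary consistent) bound on their total probability with the quantum value. Concretely, I take the order-$(d,d-1)$ gadget exhibited earlier, whose $d$ distinguished vectors $|m_1\rangle,\dots,|m_d\rangle$ can be chosen with $\langle m_i|m_j\rangle \to 1$ as the number of repeating units grows, and complete its bases so that the orthogonality graph $G$ carries contexts $\mathcal{C}_G = \{A_1,\dots,A_k\}$ of size $\omega(G)=d$. The candidate inequality is then
\begin{equation}
\sum_{i=1}^{d} P(e_{m_i}) \;\leq\; \beta_{\mathrm{bin}},
\qquad
\beta_{\mathrm{bin}} := \max_{B \in \mathtt{B}^{\text{bin-cons}}_G}\; \sum_{i=1}^{d} P(e_{m_i}),
\end{equation}
so that $\beta_{\mathrm{bin}}$ is by construction the tight bound over the fundamentally binary consistent polytope, while the algebraic maximum of the functional is $d$.

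First I would pin down $\beta_{\mathrm{bin}}$. Since the functional is linear and $\mathtt{B}^{\text{bin-cons}}_G$ is a convex hull, the maximum is attained at an extreme point, i.e.\ at a single binary consistent assignment $f:V_G\to[0,1]$ in which every context carries exactly two nonzero values summing to $1$. Because $\{0,1\}$-colorings are a (degenerate) special case of such assignments and the gadget is of order $(d,d-1)$, one immediately gets $\beta_{\mathrm{bin}}\geq d-1$; the crucial point is the strict upper bound $\beta_{\mathrm{bin}}<d$. I would extract it from the gadget's defining obstruction. Were $\sum_i f(m_i)$ to approach $d$, every $f(m_i)$ would tend to $1$, so each $m_i$ would become the sole nonzero vertex of its contexts, pushing every neighbour of a distinguished vertex toward $0$. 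The order-$(d,d-1)$ property is precisely the statement that no valid assignment realizes all $d$ distinguished values at once; tracing its forcing chain through the completion, the all-ones limit drives some context $A_j$ to carry only zeros, contradicting the requirement that two of its entries sum to $1$. Hence the all-ones limit is infeasible, and by compactness of the polytope $\beta_{\mathrm{bin}}$ is attained strictly below $d$.

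Next I would evaluate the quantum value on the gadget's faithful orthogonal representation. There $P(e_{m_i}) = |\langle m_i|\psi\rangle|^2$, so the optimal quantum value is the largest eigenvalue of $\sum_{i=1}^d |m_i\rangle\langle m_i|$. Choosing $|\psi\rangle$ along the common direction approached by the $|m_i\rangle$ as $\langle m_i|m_j\rangle\to 1$ makes each $|\langle m_i|\psi\rangle|^2\to 1$, whence $\sum_i P(e_{m_i})\to d$, the algebraic ceiling. Writing $\beta_{\mathrm{bin}} = d-\delta$ with a fixed $\delta>0$, the quantum value can be driven above $d-\epsilon$ for any $\epsilon<\delta$ by tuning the overlaps, so the inequality is violated at a value arbitrarily close to its algebraic maximum. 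Letting $d$ grow then makes both the scenario and the certified quantum value $d$ unbounded while the violation stays close to algebraic, which is the sense in which the separation from the binary consistent set can be made arbitrarily large.

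The main obstacle is the upper bound of the second step: certifying that the continuum of two-outcome splits available to binary consistent assignments cannot erode the gadget's rigid $\{0,1\}$ obstruction all the way up to the algebraic value $d$. Unlike a coloring argument, each context now admits a whole interval of fractional splits, so one must show that the bad context $A_j$ stays genuinely obstructed as the distinguished values are pushed toward $1$---equivalently, that the linear program defining $\beta_{\mathrm{bin}}$ has optimum strictly below $d$ with a gap $\delta$ that does not close. Establishing this $\delta>0$ (and ideally a closed form for $\beta_{\mathrm{bin}}$ in terms of the gadget's orthogonality graph) is where the real work lies; the quantum computation and the amplification in $d$ are then routine.
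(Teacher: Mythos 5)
Your proposal has a genuine gap at its central step, and it is exactly the step you flag at the end as ``where the real work lies.'' The order-$(d,d-1)$ gadget property is, by definition, a statement about $\{0,1\}$-colorings only: no $\{0,1\}$-coloring assigns $1$ to all $d$ distinguished vertices. It says nothing about the fractional assignments that constitute binary consistent boxes, where each context carries two nonzero values summing to $1$. Your ``forcing chain'' argument does not survive the passage to fractional values: in a $\{0,1\}$-coloring, a vertex valued $1$ forces its neighbours to exactly $0$ and the exclusion propagates losslessly through the construction, but a vertex valued $1-\epsilon$ only forces its neighbours below $\epsilon$, and in the next clique those near-zero vertices impose essentially no constraint on the remaining vertices, so the contradiction can dissolve after one step. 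Indeed, the distinction between gadgets whose defining property holds only for $\{0,1\}$-colorings and those for which it holds for all $[0,1]$-assignments is precisely why the paper (following \cite{RRHP+20}) introduces the separate notion of an \emph{extended} $01$-gadget; for an ordinary gadget the fractional analogue of the obstruction can simply fail, so your claimed $\beta_{\mathrm{bin}} < d$, and a fortiori a uniform gap $\delta>0$ as the overlaps are tuned toward $1$, is not established.

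The paper closes this gap with two tools absent from your proposal. First, it works with extended $01$-gadgets, whose defining property $f(v_1)+f(v_2)<2$ holds for \emph{every} assignment $f:V\rightarrow[0,1]$, and which (by Theorem 4 of \cite{RRHP+20}) exist for any pair of non-orthogonal distinguished vectors, including overlap $\rightarrow 1$. Second, it invokes the Balinski/Nemhauser--Trotter theorem: vertices of the fractional stable-set polytope are half-integral, i.e.\ all coordinates lie in $\{0,\tfrac12,1\}$, and this half-integrality is inherited by the extreme points of $\mathtt{B}^{\text{bin-cons}}_G$ since the clique-normalization conditions are supporting hyperplanes. Combining the two, at any extreme point the distinguished values lie in $\{0,\tfrac12,1\}$ and cannot both be $1$, so their sum is at most $3/2$ --- a fixed quantitative bound, independent of the gadget's size or overlap. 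Quantum theory, measuring on $|v\rangle \propto |v_1\rangle+|v_2\rangle$, achieves $1+\cos\theta \rightarrow 2$, the algebraic maximum. Your plan of amplifying with $d$ distinguished vertices is attractive and would give a larger absolute separation if it worked, but without an extended version of the order-$(d,d-1)$ property (which the paper only remarks can ``be constructed in analogy'' and does not prove) and without a half-integrality-type argument to convert strictness into a uniform gap, the key inequality remains unproven.
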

\begin{proof}
The proof will make use of the idea of `extended $01$-gadgets' that we introduced in \cite{RRHP+20}. 
\begin{definition}
\label{def:ext-gad}
An extended $01$-gadget in dimension $d$ is a $\{0,1\}$-colorable graph $G_{xgad} = (V_{xgad}, E_{xgad})$ with faithful dimension $d^*(G_{xgad}) = \omega(G_{xgad}) = d$ and with two distinguished non-adjacent vertices $v_1 \nsim v_2$ such that in any assignment $f : V_{xgad} \rightarrow [0,1]$, it holds that $f(v_1) + f(v_2) < 2$. 
\end{definition}
In other words, an extended $01$-gadget is similar to a normal $01$-gadget except that the defining characteristic holds for arbitrary assignments in $[0,1]$ rather than only to $\{0,1\}$ assignments.

In \cite{RRHP+20}, we had proven the following statement that shows a construction of an extended $01$-gadget between any two non-orthogonal vectors in $\mathbb{C}^d$. 
\begin{lemma*}[Theorem $4$ in \cite{RRHP+20}]
Let $|v_1 \rangle$ and $|v_2 \rangle$ be any two distinct non-orthogonal vectors in $\mathbb{C}^d$ with $d \geq 3$. Then there exists an orthogonality graph $G_{xgad}$ that constitutes an extended $01$-gadget in dimension $d$ with the corresponding vertices $v_1$ and $v_2$ being the distinguished vertices.
\end{lemma*}

We now show that for any extended $01$-gadget, the sum of the binary consistent (probability) assignments to the two distinguished vertices in any box $B \in \mathtt{B}_G^{\text{bin-cons}}$ is at most $3/2$. To do so, we recall the notion of the Fractional Stable-Set Polytope $(FSTAB(G))$ of a graph $G = (V_G, E_g)$ which is defined as

\begin{equation}
    FSTAB(G) = \big\{ \vec{x} \in \mathbb{R}_+^{|V_G|} \; | \; x_{v} + x_{w} \leq 1 \; \; \forall (v,w) \in E_G \big\}.
\end{equation}
We recognise that the fractional stable-set polytope is defined by similar constraints to the set of binary boxes $\mathtt{B}^{\text{bin-cons}}_G$ except for the fact that the defining constraint $x_v + x_w \leq 1$ in $FSTAB(G)$ is replaced by the constraint that $\forall c \in \mathcal{C}_G$, $\exists v, w \in c$ such that $x_v + x_w = 1$ in $\mathtt{B}^{\text{bin-cons}}_G$. 
By introducing a slack variable $y_{v,w}$ for each edge constraint, we rewrite the fractional stable-set polytope as 

\begin{equation}
    FSTAB(G) = \big\{ (\vec{x}, \vec{y}) \in \mathbb{R}_+^{|V_G|} \times \mathbb{R}_+^{|E_G|} \; | \; x_v + x_w + y_{v,w} = 1 \; \; \forall (v,w) \in E_G \big\}.
\end{equation}
Here, every vertex of $G$ indexes an $x$ variable while every edge of $G$ indexes a slack $y$ variable, so that $x$ and $y$ can be termed vertex variables and edge variables respectively. A vertex $v$ is said to be $k$-valued in the fractional assignment $(\vec{x}, \vec{y})$ if the corresponding vertex variable takes value $k$, and similarly an edge $(v,w)$ is said to be $j$-valued in the assignment if the corresponding edge variable takes value $j$. The following theorem by Nemhauser and Trotter \cite{NT75}, following an earlier result by Balinski \cite{Bal65} provides a characterisation of the vertices of $FSTAB(G)$. 
\begin{theorem}[Balinski \cite{Bal65}, Nemhauser and Trotter \cite{NT75}]
Let $\vec{x} \in \mathbb{R}_+^{|V_G|}$ be a vertex of $FSTAB(G)$. Then for every vertex $v \in V_G$, it holds that $x_v \in \{0, \frac{1}{2}, 1\}$, i.e., that every vertex is $0$ or $1/2$ or $1$-valued in $\vec{x}$. 
\end{theorem}
Now, since the set of normalization conditions $NORM_G := \{\text{NORM}_G^{c}\}_c$ for the maximum cliques $c \in \mathcal{C}_G$ where

\begin{equation}
    \text{NORM}_G^{c} := \{ \exists v, w \in c \; \text{s.t.} \; f(v) + f(w) = 1 \}
\end{equation}
form supporting hyperplanes of $FSTAB(G)$, we see that the vertices of $FSTAB(G) \cap NORM_G$ inherit the characterisation derived in the above theorem, i.e., the corresponding edge (slack) variables $y$ take value $0$ for each edge in the graph. We thus obtain
\begin{corollary}
Let $\{P(a|x)\}$ be a vertex of $\mathtt{B}_G^{\text{bin-cons}}$. Then for every context $c \in \mathcal{C}_G$ and for every outcome $s \in [d]$, it holds that $P(a=s|x = c) \in \{0, \frac{1}{2}, 1\}$.
\end{corollary}
It is also worth remarking that classical processing does not change the above property so that it holds also for the extreme points of the polytope of Fundamentally Binary boxes $\mathtt{B}_G^{\text{bin}}$.
Applying the above corollary to any orthogonality graph $G_{xgad}$ that constitutes an extended $01$-gadget, we see that the sum of the binary consistent assignments to the two distinguished vertices is at most $3/2$.

This statement, in conjunction with the constructions of extended $01$-gadgets in the lemma for distinguished vectors $|v_1 \rangle, |v_2 \rangle$ satisfying $|\langle v_1 | v_2 \rangle| \rightarrow 1$ shows that the inequality $P(a = v_1 | x = c_{v_1}) + P(a = v_2 | x = c_{v_2}) \leq 3/2$ forms a supporting inequality for $\mathtt{B}_{G_{xgad}}^{\text{bin}}$, where $c_{v_1}$ and $c_{v_2}$ are two contexts containing the vertices $v_1$ and $v_2$ respectively. On the other hand, measurements of the contexts on the state $|v \rangle = \frac{1}{\sqrt{2(1+\cos{\theta})}}(|v_1 \rangle + |v_2 \rangle)$ with $\langle v_i | v_j \rangle = \cos{\theta}$ show that Quantum Theory achieves the value $\left(1 + \cos{\theta}\right) \rightarrow 2$ as $\theta \rightarrow 0$.

\end{proof}

It is worth remarking that separations between the sets of binary consistent correlations and quantum correlations are not achieved by considering the inequalities for the usual Kochen-Specker proofs since both the sets achieve the algebraic value for those inequalities. This shows the importance of the constructions of gadgets and extended gadgets in deriving such separating hyperplanes. Furthermore, it is also clear that higher-order extended gadgets can be constructed in analogy with the constructions of higher-order gadgets in the rest of this paper. In the future, it would be interesting to see if these constructions can be used to show large separations between the set of quantum contextual correlations and the set of $n$-ary consistent correlations that are defined analogously to the binary consistent correlations as composed from measurements yielding at most $n$ outcomes. 
\subsection{Optimal semi-device-independent randomness generation using gadgets}
Contextuality can serve as the basis for randomness (or key) generation, either via stand-alone protocols that test for the violation of a non-contextuality inequality \cite{U13} (where one assumes that the measurements conform to the specific orthogonality graph), or through the conversion of a single-party contextuality test into a two-party Bell inequality \cite{HHHH+10} or through the conversion of a non-contextuality inequality to a prepare-and-measure protocol \cite{gupta2022quantum}. A common step in all such protocols \cite{PironioNature, RBHH+15, CR12, KAF17, BRGH+16, GHH+14} is the identification of a suitable measurement in the (contextuality) test that yields the highest possible randomness or key generation rate. It is well-known that the maximum randomness (quantified by the min-entropy) per run that can be extracted from a test where the parties perform projective measurements on a system of dimension $d$ is $\log_2 d$. The importance and utility of gadgets for randomness certification has been commented on previously, we elaborate on this aspect and focus on their importance for optimal randomness certification in this section. 

The Kochen-Specker theorem shows that it is impossible to assign classical (deterministic) values to all quantum observables in a consistent manner, i.e., independent of the context in which the observables are measured. However, as pointed out in \cite{ACS15, ACS14}, the fact that not all quantum observables can be assigned definite values does not imply that no observable can be assigned a definite outcome. And in general, proofs of contextuality do not specify which observables are value-indefinite. Specifically, for a contextuality test with a set of observables $\{A_1, \dots, A_k\}$ we want to solve

\begin{equation}
  \begin{split}
    \text{max} \; &P_{guess}(A_i|E)\\
     s.t.\;~%&\Gamma \geq 0,\\
           &I(P_{A|X}) = I^*,\\
           &P_{A,E|X} \in \mathcal{Q},
  \end{split}
\end{equation}
where $I(P_{A|X})$ is a non-contextuality inequality evaluated on the observed conditional probability distributions $P_{A|X}$, $I^* \in (I_c, I_q]$ with classical and quantum values given by $I_c$ and $I_q$ respectively, and $\mathcal{Q}$ denotes the set of conditional distributions (boxes) achievable by performing measurements (compatible with the test structure on Alice's side) on quantum states shared between Alice and adversary Eve, $P_{guess}(A_i|E) = \sum_{e} P(e) P_e(a=e|i) $ is the guessing probability of Alice's outcome by an adversary $E$. By an optimal rigid contextuality test in dimension $d$ we mean one in which there exist a measurement bases $x^*$ such that $P_{A|X}(a|x^*) = 1/d$ for all outcomes $a \in [d]$ when the maximum value $I_q$ is observed. It is an open question to derive such a rigid class of contextuality tests for arbitrary dimension $d$ (see for example \cite{gupta2022quantum} where the guessing probability was calculated for the well known $5$-cycle non-contextuality inequality \cite{KCBS08}). 
%and thus far a semidefinite programming relaxation has been needed 
%as an example, 
%the well known $5$-cycle non-contextuality inequality \cite{KCBS08} was found to yield a randomness of  $\approx 0.77$ bits per measurement run in \cite{gupta2022quantum}. 

Constructions of gadgets provide a candidate solution to the problem. Specifically, the extended $01$-gadgets from Definition \ref{def:ext-gad} were used in \cite{RRHP+20} as building blocks to construct sets of vectors $\mathcal{S}'$ such that for any $[0,1]$-assignment $f: \mathcal{S}' \rightarrow [0,1]$ it holds that $f(|v_1 \rangle), f(|v_2 \rangle) \in \{0,1\}$ if and only if $f(|v_1 \rangle) = f(|v_2 \rangle) = 0$. A first interesting aspect of these gadgets for randomness certification is that they allow to localise the randomness guaranteed by the KS theorem (note that a similar theorem with a more complicated construction was explored in \cite{hrushovski2004generalizations}). In other words, the observation in the contextuality test of $P(|v_1 \rangle) = 1$ guarantees that $0 < P(|v_2 \rangle) < 1$ for any consistent box $P$ compatible with the measurement structure of the gadget. Secondly, if one has a rigid construction \cite{BRV+19} with overlap $|\langle v_1 | v_2 \rangle| = 1/\sqrt{d}$, one can readily derive a contextuality test allowing optimal randomness certification (for example with a non-contextuality inequality of the form $\beta P(|v_1 \rangle) + P(|v_2 \rangle) \leq 1$ with $\beta \gg 1$, for which the optimal quantum value is then $\beta + 1/d$). Here, by a rigid construction we mean one for which there exists a non-contextuality inequality whose maximum violation certifies a fully random outcome (with uniform probabilities $1/d$) for one of the measurement bases in the construction. One way to ensure this is if for the construction, the set of vectors realizing its orthogonality graph $G$ is unique in $\mathbb{C}^{\omega(G)}$ (up to unitaries). For the gadget-within-gadget construction in the proof of Theorem $4$ in \cite{RRHP+20}, it was shown that for the $k$-th iteration in the construction the maximum overlap of the distinguished vectors takes the form $\frac{k}{k+2}$, so that the  construction allows optimal randomness certification for $d = 4$ at $k=2$. As shown in Fig. 1, the maximum overlap between the distinguished vectors $|u_1^{(2)} \rangle, |u_{8}^{(2)} \rangle$ is $\frac{1}{2}$, and the orthogonal representation of this gadget is
$\langle v_1^{(1)}|=\langle v_4^{(2)}| = (1,0,0)$,
$\langle v_2^{(1)}| = (0,-\frac{\sqrt{3}}{2},\frac{1}{2})$,
$\langle v_3^{(1)}| = (0,\frac{\sqrt{3}}{2},\frac{1}{2})$,
$\langle v_4^{(1)}| = (-1,-\frac{\sqrt{2}}{2},-\frac{\sqrt{6}}{2})$,
$\langle v_5^{(1)}| = (-1,-\frac{\sqrt{2}}{2},\frac{\sqrt{6}}{2})$,
$\langle v_6^{(1)}| = (\frac{\sqrt{2}}{3},-\frac{1}{6},-\frac{\sqrt{3}}{6})$,
$\langle v_7^{(1)}| = (\frac{\sqrt{2}}{3},-\frac{1}{6},\frac{\sqrt{3}}{6})$,
$\langle v_8^{(1)}| = \langle v_5^{(2)}|=\frac{2\sqrt{2}}{3}(\frac{\sqrt{2}}{4},1,0)$,
$\langle v_1^{(2)}| = (-\frac{3}{2},-\frac{3\sqrt{2}}{4},\frac{3\sqrt{2}}{4})$,
$\langle v_2^{(2)}| = (0,1,1)$,
$\langle v_3^{(2)}| = (-\frac{3\sqrt{2}}{4},\frac{3}{8},-\frac{8}{9})$,
$\langle v_6^{(2)}| = (0,-1,1)$,
$\langle v_7^{(2)}| = (1,-\frac{\sqrt{2}}{4},-\frac{3\sqrt{2}}{4})$
$\langle v_8^{(2)}| = (\sqrt{2},1,1)$.

% \begin{widetext}

\begin{figure}
    \centering
    \includegraphics[width=10cm]{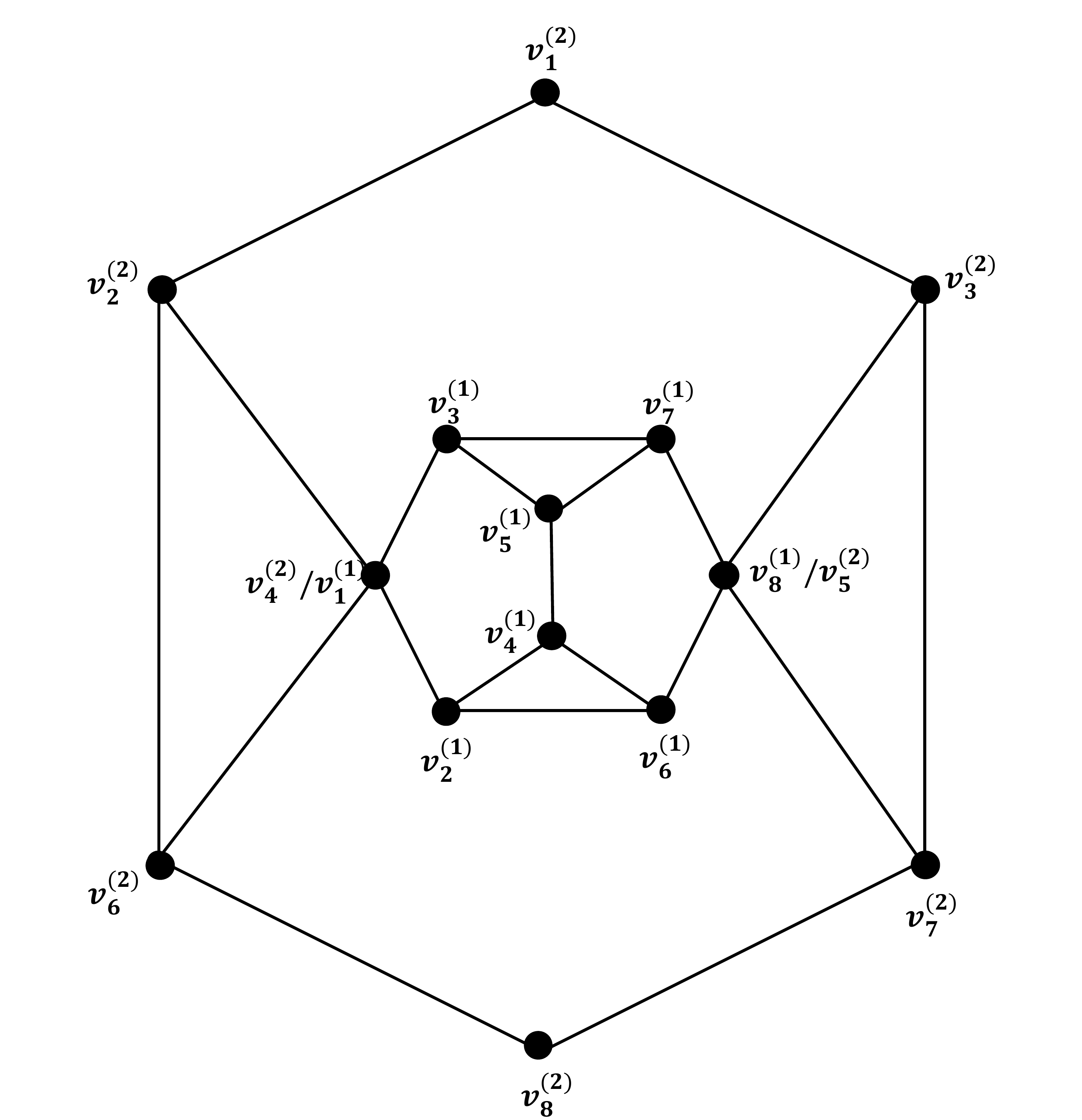}
    \caption{An example illustrating the construction of a gadget with prescribed maximum overlap between the distinguished vectors $|v_1^{(2)} \rangle, |v_{8}^{(2)} \rangle$ of $\frac{1}{2}$ for any real orthogonal representation. The example illustrates the utility of gadgets in deriving contextuality tests that allow for optimal randomness certification for $d=4$.}
    \label{ran4}
\end{figure}

\begin{figure}
    \centering
    \includegraphics[width=15cm]{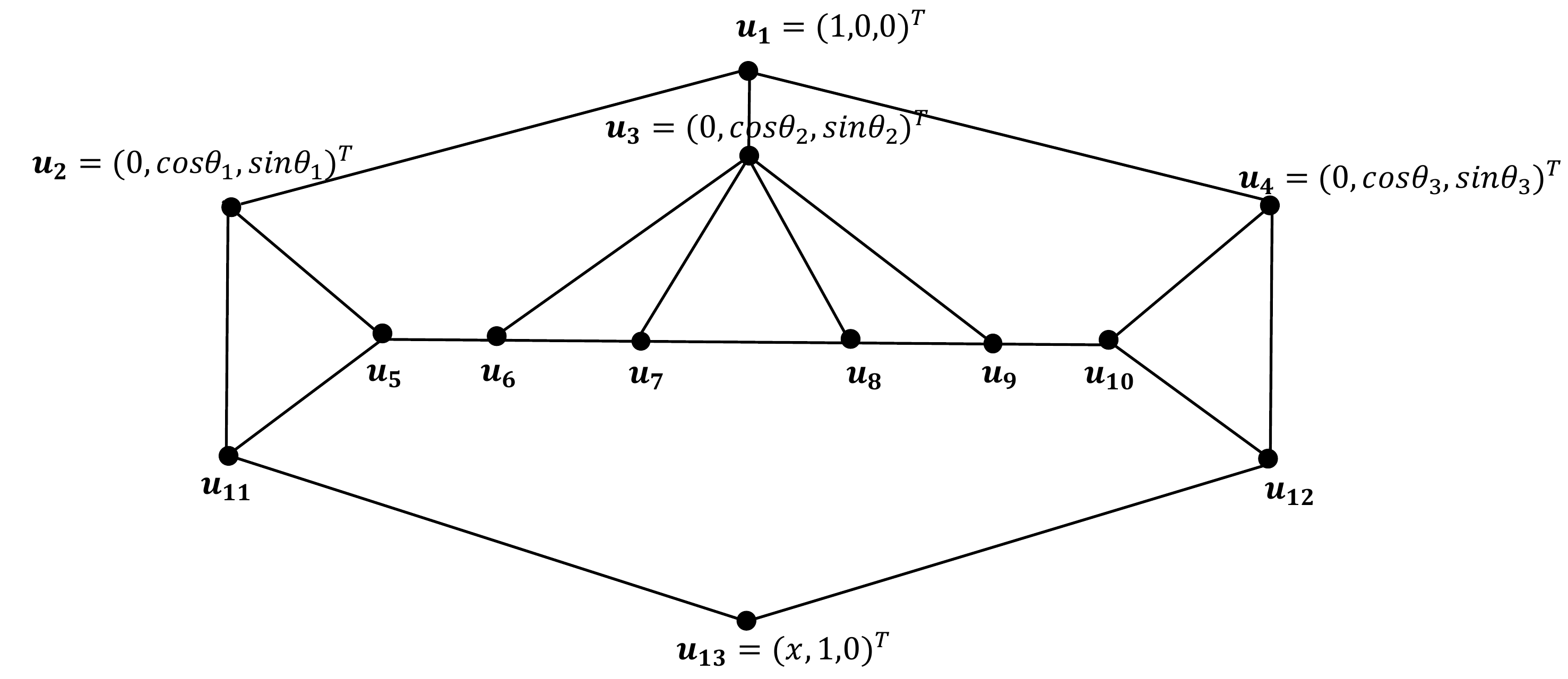}
    \caption{An example illustrating the construction of a gadget with prescribed maximum overlap between the distinguished vectors $|u_1 \rangle, |u_{13} \rangle$ of $\frac{1}{\sqrt{5}}$ for any real orthogonal representation. The example illustrates the utility of gadgets in deriving contextuality tests that allow for optimal randomness certification for $d=5$.}
    \label{ran}
\end{figure}
%\end{widetext}
We give a different construction that allows to certify $\log_2 d$ bits of randomness in dimension $d = 5$ here, and pursue the general question of rigid contextuality tests \cite{RLH12, BRV+19} certifying $\log_2 d$ bits for arbitrary $d$ (as well as their monogamy relations \cite{RSKK12, SR17} and the security proofs of the corresponding protocols) for future work.

Consider the orthogonality graph shown in Fig. 2. Without loss of generality, we consider $\langle u_1| = (1,0,0)$, $\langle u_{13} | = \frac{1}{\sqrt{1+x^2}}(x,1,0)$. Parametrizing $\langle u_2| = (0, \cos{\theta_1}, \sin{\theta_1})$, $\langle u_3| =  (0, \cos{\theta_2}, \sin{\theta_2})$ and $\langle u_4 | =  (0, \cos{\theta_4}, \sin{\theta_4})$ to ensure orthogonality with $\langle u_1|$, we deduce the following (unnormalized) vectors by taking appropriate cross products $\langle u_{11}| = (-\sin{\theta_1}, x \sin{\theta_1}, - x \cos{\theta_1})$, $\langle u_5| = (-x,-\sin^2{\theta_1},(1/2)\sin{2 \theta_1})$, $\langle u_6| = (-\cos{(\theta_1 - \theta_2)} \sin{\theta_1}, x \sin{\theta_2}, -x \cos{\theta_2})$, \\$\langle u_7| = (-x, -\cos{(\theta_1 - \theta_2)} \sin{\theta_1} \sin{\theta_2}, \cos{(\theta_1 - \theta_2)} \sin{\theta_1} \cos{\theta_2})$, $\langle u_{13} | = (-\sin{\theta_3}, x \sin{\theta_3}, - x \cos{\theta_3})$, $\langle u_{10} | = (-x, -\sin^2{\theta_3}, (1/2)\sin{2 \theta_3})$, $\langle u_9 | = (\cos{(\theta_2 - \theta_3)} \sin{\theta_3}, - x \sin{\theta_2}, x \cos{\theta_2})$ and $\langle u_8| = (x, \cos{(\theta_2 - \theta_3)} \sin{\theta_2} \sin{\theta_3}, -\cos{(\theta_2 - \theta_3)} \cos{\theta_2} \sin{\theta_3})$. We now ask what is the maximum value of the overlap $|\langle u_1| u_{13} \rangle| = \frac{x}{\sqrt{1+x^2}}$ under the constraint that \\$-\langle u_7|u_8 \rangle =  x^2 + \cos{(\theta_1 - \theta_2)} \cos{(\theta_3 - \theta_2)} \sin{\theta_1} \sin{\theta_3}$ equals $0$. Or equivalently we wish to minimize $\cos{(\theta_1 - \theta_2)} \cos{(\theta_2 - \theta_3)} \sin{\theta_1} \sin{\theta_3}$. Setting the partial derivatives of this expression with respect to $\theta_1, \theta_2, \theta_3$ to equal zero, and finding the maximum overlap over all the solutions gives that $\theta_1 = -\theta_3 = \pi/4$ and $\theta_2 = 0$ with the optimal overlap $|\langle u_1| u_{13} \rangle| = 1/\sqrt{5}$. The addition of two other vertices $u_{14}, u_{15}$ that are adjacent to all the vertices of the graph as well as to each other, gives the natural orthogonal representation in dimension $5$ with $\langle u_{14}| = (0,0,0,1,0)$ and $\langle u_{15} | = (0,0,0,0,1)$. We have thus constructed an extended $01$-gadget in dimension $5$ with the maximum overlap between the distinguished vertices being $\frac{1}{\sqrt{5}}$ (for any orthogonal representation in $\mathbb{R}^5$). While not a full self-testing statement, this indicates that the maximum violation of a non-contextuality inequality could allow to certify $\log_2 5$ bits for this construction.
As stated earlier, we leave for future work the derivation of rigid contextuality tests based on gadgets to certify $\log_2 d$ bits for arbitrary dimension $d$ and the security proofs of the corresponding (semi-device-independent) contextuality-based randomness generation.

\subsection{Classical Value of Non-Contextuality Inequalities versus the weighted independence number}
An interesting offshoot of the study of gadget structures is to point out a subtle modification in a famous result by Cabello, Severini and Winter (CSW) in \cite{CSW14, CSW10} when the Kochen-Specker rules of exclusivity and completeness are enforced. In formulating the graph-theoretic approach to quantum correlations, CSW had considered general non-contextuality inequalities $S$ as a positive linear combination of probabilities of events $S = \sum_i w_i P(e_i)$ with $w_i > 0$. For instance, the well-known KCBS inequality corresponding to the $5$-cycle exclusivity graph is of the form $S_{KCBS} = \sum_{i=0}^4 P(0,1|i,i+1) \leq 2$ with $2$ denoting the maximal value in all non-contextual hidden variable theories. In the CSW framework, one associates to every such non-contextuality inequality $S$ a vertex-weighted graph $(G, w)$ (note that a vertex-weighted graph $(G,w)$ is a graph $G$ with vertex set $V$ and weight assignment $w : V \rightarrow \mathbb{R}_+$).
The events $e_i$ appearing in $S$ are represented by vertices in $G$, adjacent vertices in $G$ represent exclusive events (events $e_i$ and $e_j$ are exclusive if there exist jointly measurable observables $\mu_i$ and $\mu_j$ that distinguish between the events), and the vertex weights represent the coefficients $w_i$ of the probabilities $P(e_i)$. The graph $(G,w)$ is then called the exclusivity graph of $S$. The main result of CSW is the following theorem showing how the exclusivity graph of $S$ can be used to calculate the optimal value of the inequality in classical and quantum theories.  
\begin{theorem}[Result $1$ of CSW \cite{CSW14}]
Given $S$ corresponding to a non-contextuality inequality, the maximum value of $S$ in classical and quantum theories is given by

\begin{equation}
    S \stackrel{\text{NCHV}}{\leq} \alpha(G, w) \stackrel{\text{Q}}{\leq} \theta(G,w),
\end{equation}
where $\alpha(G,w)$ is the independence number of $(G,w)$ and $\theta(G,w)$ is the Lov\'{a}sz-theta number of $(G,w)$. 
\end{theorem}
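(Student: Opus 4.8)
The plan is to treat the two inequalities separately, since they rest on entirely different facts: the classical bound is a combinatorial averaging argument, whereas the quantum bound is an immediate consequence of the variational characterisation of the (weighted) Lov\'{a}sz theta number.

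For the non-contextual bound $S \leq \alpha(G,w)$, I would first recall that in any non-contextual hidden-variable model each hidden variable $\lambda$ deterministically assigns a value $v_\lambda(e_i) \in \{0,1\}$ to every event, and that the exclusivity relations encoded by the edges of $G$ forbid two adjacent (exclusive) events from both being assigned the value $1$: there exist jointly measurable observables distinguishing them, so at most one can occur. Consequently, for each fixed $\lambda$ the support $I_\lambda = \{ i : v_\lambda(e_i)=1 \}$ is an independent set of $G$, whence
\[
S(\lambda) = \sum_i w_i\, v_\lambda(e_i) = \sum_{i \in I_\lambda} w_i \;\leq\; \alpha(G,w)
\]
by definition of the weighted independence number. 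Averaging over the distribution $p(\lambda)$ of hidden variables gives $S = \sum_\lambda p(\lambda)\, S(\lambda) \leq \alpha(G,w)$, which is the desired bound. This step is essentially routine; the only care required is to justify that exclusivity of a pair of events is exactly the condition preventing the joint assignment of the value $1$, i.e.\ that adjacency in $G$ is the operative constraint.

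For the quantum bound $S \leq \theta(G,w)$, I would invoke the orthonormal-representation form of the weighted Lov\'{a}sz number,
\[
\theta(G,w) = \max \; \sum_{i \in V_G} w_i\, |\langle \psi | u_i \rangle|^2,
\]
where the maximisation runs over all unit handle vectors $|\psi\rangle$ and all families of unit vectors $\{ |u_i\rangle \}$ that are orthogonal whenever $i \sim j$ in $G$ (equivalently, an orthonormal representation of $\overline{G}$). The key observation is that every quantum realisation of the events supplies precisely such data: exclusive (adjacent) events are represented by orthogonal projectors $\Pi_i \Pi_j = 0$, and for rank-one projectors $\Pi_i = |u_i\rangle\langle u_i|$ one has $P(e_i) = |\langle \psi | u_i \rangle|^2$ on the (purified) state $|\psi\rangle$. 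Substituting this feasible point into the maximisation yields $S = \sum_i w_i\, P(e_i) \leq \theta(G,w)$ at once. Equivalently, I could phrase the argument as showing that the quantum probability vector $\big(P(e_i)\big)_i$ lies in the theta body $\mathrm{TH}(G)$ and then appeal to $\theta(G,w) = \max_{x \in \mathrm{TH}(G)} \langle w, x \rangle$.

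The main obstacle lies in this quantum step: handling projectors of rank greater than one, which genuinely arise for multi-outcome observables, and keeping the graph-complement conventions straight. For higher-rank $\Pi_i$ I would reduce to the rank-one case either by splitting each high-rank event into orthogonal rank-one sub-events (which only enlarges the orthogonality graph in a manner consistent with the exclusivity structure) or by directly building an orthonormal representation from the ranges of the $\Pi_i$ together with a handle derived from $|\psi\rangle$. A cleaner but less elementary alternative is to pass to the semidefinite-programming dual of $\theta(G,w)$ and exhibit a single dual-feasible certificate bounding $\sum_i w_i\, \Tr(\rho \Pi_i)$ uniformly over all states $\rho$ and all admissible projector assignments; verifying dual feasibility is where the substantive work resides.
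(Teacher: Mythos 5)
The paper does not actually prove this statement: it is quoted as Result~$1$ of CSW \cite{CSW14} and used as a known external result, so there is no internal proof to compare against. Judged on its own merits, your proposal follows the standard CSW argument and is essentially correct. The classical half is complete as written: deterministic noncontextual assignments respecting exclusivity have independent-set supports, and convexity over the hidden variable $\lambda$ does the rest. Your convention for the quantum half is also the right one (vectors orthogonal whenever $i \sim j$ in $G$, i.e.\ an orthonormal representation of $\overline{G}$, with a handle).

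The only loose end is the higher-rank projector case, and there you overestimate the work required; neither a rank-one splitting nor an SDP dual certificate is needed. Given projectors $\Pi_i$ of arbitrary rank with $\Pi_i \Pi_j = 0$ for $i \sim j$, and a pure state $|\psi\rangle$ (purity is without loss of generality, since the quantum value is linear in the state and hence maximised at a pure state), define $|u_i\rangle = \Pi_i |\psi\rangle / \lVert \Pi_i |\psi\rangle \rVert$ whenever $\Pi_i |\psi\rangle \neq 0$. Then $\langle u_i | u_j \rangle \propto \langle \psi | \Pi_i \Pi_j | \psi \rangle = 0$ for $i \sim j$, so these vectors are a feasible orthonormal representation with handle $|\psi\rangle$, and moreover $|\langle \psi | u_i \rangle|^2 = \lVert \Pi_i |\psi\rangle \rVert^2 = \Tr\left(\Pi_i |\psi\rangle\langle\psi|\right) = P(e_i)$ exactly, not merely as a bound. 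Substituting this feasible point into the variational form of $\theta(G,w)$ finishes the quantum inequality in one line for arbitrary ranks. This is precisely your second suggested route (the representation built from the ranges of the $\Pi_i$ with handle $|\psi\rangle$), carried out; with it, your proof is complete and matches the argument in the cited literature.
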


While it was recognized that $\theta(G,w)$ may only provide an upper bound to the quantum value in some cases - specifically when further constraints are imposed coming from the physical settings of the experiment, the statement that the classical value of any given non-contextuality inequality is given by $\alpha(G,w)$ is ubiquitous and taken to be true without any qualifications in much of the literature to this point. Furthermore, it is often also applied in situations when the Kochen-Specker rule of Completeness is enforced, namely that every maximum clique has exactly one vector that is assigned value $1$. We now make the observation that this statement needs to be carefully considered in computing the classical value of non-contextuality inequalities arising from gadget-type strucutres, $\alpha(G,w)$ only provides an upper bound to the classical value for such inequalities when Completeness is enforced. Specifically consider a non-contextuality inequality $S$ arising from a gadget-type exclusivity graph, so that vertex-weighted graph $(G,w)$ corresponding to $S$ is a gadget-type graph, meaning that the vertex set $V$ of $G$ can be partitioned into an independent set of distinguished vertices $V_{\text{dist}} \subset V$ and the non-distinguished vertices $V \setminus V_{\text{dist}}$. And furthermore, the weights $w_i$ in $S$ are given as

\begin{equation} 
\label{eq:weight-CSW}
w_i = \left\{ \begin{array}{ll} 
      1 & i \in V_{\text{dist}} \\
      0 & i \in V \setminus V_{\text{dist}} \\
   \end{array}
\right. 
\end{equation}

\begin{observation}
Given $S$ corresponding to a non-contextuality inequality, the maximum value of $S$ under the KS rules of Exclusivity and Completeness, in any classical (non-contextual hidden variable) theory is given by $\alpha(G,w)$ if and only if the vertex-weighted graph $(G,w)$ is not of gadget-type.
\end{observation}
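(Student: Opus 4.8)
The plan is to reduce the biconditional to two elementary facts: that with the weight assignment of Eq.~(\ref{eq:weight-CSW}) the weighted independence number collapses to $\alpha(G,w)=|V_{\text{dist}}|$, and that enforcing Completeness on top of Exclusivity lowers the attainable classical value below this number exactly when the gadget property is present. First I would compute $\alpha(G,w)$ directly. Since $w_i=1$ for $i\in V_{\text{dist}}$ and $w_i=0$ otherwise, the weight of any independent set $I$ is $w(I)=|I\cap V_{\text{dist}}|$, which is maximised by taking $I\supseteq V_{\text{dist}}$. Because the distinguished vertices form an independent set by the gadget-type hypothesis, $V_{\text{dist}}$ is itself admissible, and no independent set can carry weight exceeding the number of distinguished vertices; hence $\alpha(G,w)=|V_{\text{dist}}|$. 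I would stress that this is precisely the CSW value and that it corresponds to the \emph{Exclusivity-only} relaxation: the assignment giving $1$ to every vertex of $V_{\text{dist}}$ and $0$ elsewhere meets condition (i) of Definition~1, because the independence of $V_{\text{dist}}$ forbids two distinguished vertices from sharing a clique.

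Next I would bring in Completeness. Under the KS rules the admissible assignments are exactly the $\{0,1\}$-colourings of Definition~1, which additionally demand that every maximum clique carry exactly one $1$-valued vertex, so the classical value becomes $C_{\mathrm{KS}}=\max_f \sum_{v\in V_{\text{dist}}} f(v)$ with the maximum now ranging only over colourings obeying both Exclusivity and Completeness. Since each summand lies in $\{0,1\}$ and there are $|V_{\text{dist}}|$ of them, I obtain immediately $C_{\mathrm{KS}}\leq |V_{\text{dist}}|=\alpha(G,w)$, so that imposing Completeness can only decrease the value relative to the CSW formula. The equality $C_{\mathrm{KS}}=\alpha(G,w)$ then holds if and only if some Completeness-respecting colouring still assigns $1$ to every distinguished vertex simultaneously.

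The final step is to identify this equality condition with the gadget property. By definition, $(G,w)$ is of gadget-type precisely when no valid $\{0,1\}$-colouring assigns $1$ to all of $V_{\text{dist}}$ at once, i.e. when at most $k<|V_{\text{dist}}|$ distinguished vertices can ever be $1$; combining this with the equality condition above yields the stated biconditional. I expect the only genuine subtlety to be conceptual rather than computational: the witness realising $\alpha(G,w)$ under Exclusivity alone need not survive the imposition of Completeness, and one must not conflate the naive all-ones-on-$V_{\text{dist}}$ colouring with the existence of \emph{any} Completeness-respecting colouring attaining full weight (non-distinguished vertices may also be set to $1$). The content of the observation is exactly that for gadget-type graphs the latter possibility is excluded — this is the defining gadget property — whereas for non-gadget graphs such a colouring exists by definition. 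I would close by recording the standing assumption that valid $\{0,1\}$-colourings exist at all, which holds because gadgets and the non-gadget comparison graphs are $\{0,1\}$-colourable by construction, so that $C_{\mathrm{KS}}$ is well defined throughout.
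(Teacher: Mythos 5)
Your proposal is correct and follows essentially the same route as the paper's proof: both reduce the biconditional to the facts that $\alpha(G,w)=|V_{\text{dist}}|$ is attainable only by assigning $1$ to every distinguished vertex, and that the gadget property is, by definition, exactly the obstruction to such an assignment once Completeness is enforced. Your write-up is merely more explicit (computing $\alpha(G,w)$ directly, noting convexity/well-definedness), but introduces no new idea beyond the paper's argument.
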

\begin{proof}
The proof of the observation is a direct consequence of the gadget property of the exclusivity graph $(G,w)$. By this gadget property, it holds that the vertices in the distinguished set $V_{\text{dist}}$ cannot all be assigned value $1$ in any non-contextual assignment when Completeness is enforced, despite the fact that $V_{\text{dist}}$ is an independent set. Therefore, for the non-contextuality inequality $S = \sum_i w_i P(e_i)$ with $w_i$ given by Eq. (\ref{eq:weight-CSW}), the maximum value in any non-contextual hidden variable theory is only upper bounded by and never equal to $\alpha(G,w)$ (note that achieving $\alpha(G,w)$ requires assigning value $1$ to every vertex in the independent set $V_{\text{dist}}$). Furthermore, this restriction on the assignment of $1$s (arising from the KS requirement that every measurement returns an outcome) is exactly the defining feature of gadget-type graphs so that $\alpha(G,w)$ is not equal to the classical value (under Completeness) only for the non-contextuality inequalities arising from such measurement structures.
\end{proof}
\subsection{A generalisation of gadgets with other forbidden value assignments}
Thus far, we have generalised the well-known `True-implies-False' sets or $01$-gadgets to gadgets of order $(m,k)$. These latter sets contain $m$ independent vertices (mutually non-orthogonal vectors) of which exactly $k$ may be assigned value $1$ in any $\{0,1\}$-coloring. In other words, given an independent set $I = \{v_1, \dots, v_m\}$ these consider forbidden value assignments of the form $\{(\underbrace{1,\dots,1}_{\tiny{k+1}}, f(v_{k+2}),\dots, f(v_m)\}$ and permutations thereof, where $f: V \rightarrow \{0,1\}$ is any $\{0,1\}$-coloring of the vertices. One may consider a yet more general measurement structure in which we specify a set of forbidden value assignments $\mathcal{H} \subset \{0,1\}^m$. The usual `True-implies-False' sets then correspond to $\mathcal{H} = \{(1,1)\}$ for a given independent set $I = \{v_1, v_2\}$ of two vertices.

\begin{definition}
A $\{0,1\}$-colorable set $S_{gad} \in \mathbb{C}^d$ is a gadget in dimension $d$ for a specified forbidden set $\mathcal{H} \subset \{0,1\}^m$ if it contains an (ordered) set of mutually non-orthogonal vectors $I = \{|v_1 \rangle, \dots, |v_m \rangle\}$ such that in any $\{0,1\}$-coloring of $S_{gad}$ it holds that $f(I) \in \{0,1\}^m \setminus \mathcal{H}$. 
\end{definition}

Equivalently, the generalised gadget for a forbidden assignment $\mathcal{H}$ can be defined in graph-theoretic terms as follows. 

\begin{definition} 
A $\{0,1\}$-colorable graph $G = (V,E)$ is a gadget in dimension $d$ for a specified forbidden set $\mathcal{H} \subset \{0,1\}^m$, if it has faithful
dimension $d^*(G_{gad})=\omega(G_{gad})=d$ and contains an (ordered) independent set $I \subset V$ of cardinality $|I| = m$ such that in any $\{0,1\}$-coloring $f: V \rightarrow \{0,1\}$ it holds that $f(I) \in \{0,1\}^m \setminus \mathcal{H}$. 

%For a specified forbidden-value assignment $\mathcal{H} \subset \{0,1\}^m$, a $\{0,1\}$-colorable graph $G = (V,E)$ is a gadget-for-$\mathcal{H}$.

%$\mathcal{H}\subsetneq \{0,1\}^{\times k}$, a general gadget $G=(V_G,E_G)$ is a graph that contains an independent set $I\subset V_G$ of size $|I|=k$ such that for any valid $\{0,1\}$-assignments $f: V_G\to\{0,1\}$, it is the case that $f(I)\in\mathcal{H}$.
\end{definition}

A natural question then arises - can a gadget be constructed for every forbidden set $\mathcal{H} \subset \{0,1\}^m$ for arbitrary $m \geq 2$? Note that the number of such forbidden sets $\mathcal{H}$ is $2^{2^m} - 1$ (i.e., all subsets of $\{0,1\}^m$ except the empty set). We answer this question in the affirmative and demonstrate the precise steps of building such a gadget in \revise{Supplementary Information Note 4}.  It's worth noting that Kochen Specker proofs themselves come under the umbrella of the generalised gadget structures defined here, with $\mathcal{H} = \{0,1\}^m$ for some independent set in the graph $I$ of size $|I| = m$ and for arbitrary $m \geq 1$. In other words, there is no valid $\{0,1\}$-assignment to the vertices of the independent set $I$.

 \section*{Data Availability}
 The authors declare that the data supporting the findings of this study are available within the paper and in the Supplementary Information.

 \section*{Acknowledgements}
\revise{L.Y. and R.R. acknowledge support from the Start-up Fund ”Device-Independent Quantum Communication Networks” from The University of Hong Kong, the Early Career Scheme (ECS) grant "Device-Independent Random Number Generation and Quantum Key Distribution with Weak Random Seeds" (Grant No. 27210620), the General Research Fund (GRF) grant "Semi-device-independent cryptographic applications of a single trusted quantum system" (Grant No. 17211122) and the Research Impact Fund (RIF) "Trustworthy quantum gadgets for secure online communication" (Grant No. R7035-21).}
K.H., M.R. and P.H. acknowledge partial support by the Foundation for Polish Science (IRAP project, ICTQT, contract no. MAB/2018/5, co-financed by EU within Smart Growth Operational Programme). The `International Centre for Theory of Quantum Technologies' project (contract no. MAB/2018/5) is carried out within the International Research Agendas Programme of the Foundation for Polish Science co-financed by the European Union from the funds of the Smart Growth Operational Programme, axis IV: Increasing the research potential (Measure 4.3).

\section*{Competing Interests}
The authors declare no competing interests.

\section*{Author Contributions}
The project was conceived by R.R., L.Y., and K.H., and all authors discussed extensively and contributed to the development of the theory and content presented in the manuscript. The initial draft of the manuscript was written by R.R., L.Y., and K.H., and it was critically reviewed and revised by P.H. before the final approval of the completed version by all authors.

\section*{References}

\clearpage

\section*{Supplementary Information for `` Optimal Measurement Structures for Contextuality Applications''}

\subsection* {Supplementary Note 1: The construction of an order $(d,d-1)$ gadget for general dimension $d$ }\label{speical_gagdet}
We first show the order $(3,2)$ gadget with the three distinguished vectors $|m_1 \rangle, |m_2 \rangle$ and $|m_{3} \rangle$ being arbitrary close to each other as a special case, and then generalize this construction to order $(d,d-1)$ cases for arbitrary $d$.
\begin{figure}[htbp]
    \centering
    \includegraphics[width=18cm]{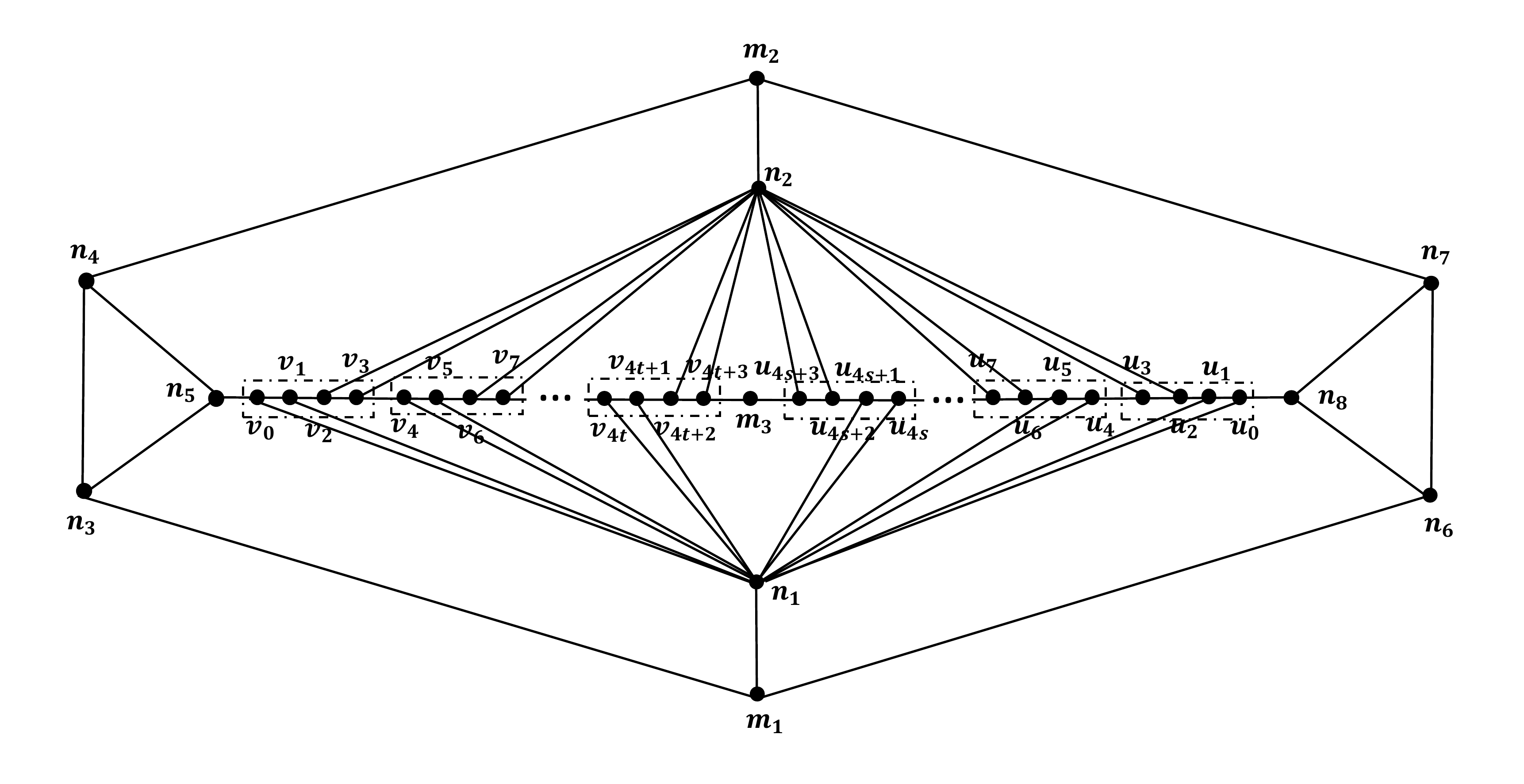}
    \caption{A construction of an order $(3,2)$ gadget. As the number of repeating unit of $4$ vectors (denoted by dashed rectangles) $t,s\to\infty$, the construction serves as a gadget for distinguished vectors $|m_1 \rangle, |m_2 \rangle$ and $|m_3 \rangle$ (with arbitrary overlaps).}
    \label{special_3}
\end{figure}

The particular set of vectors that give the orthogonal representation of the order $(3,2)$ gadget in Fig.\ref{special_3} is as follows:

\begin{equation}
    \begin{split}
    |n_1\rangle&=(	0,  1,  0)^T, \qquad 
    |n_2\rangle=(	-\cos\theta,  \sin\theta,  0	)^T\\
    |n_3\rangle&=(	0,  \sin\theta,  -\cos\theta)^T, \qquad
    |n_4\rangle=(	-\cos^2\theta, (1/2) \sin{2\theta}, \sin^2\theta)^T\\
    |n_5\rangle&=(	\sin\theta,  \cos^3\theta,  \sin\theta\cos^2\theta)^T,\qquad
    |n_6\rangle=(0,  \sin\phi,  -\cos\phi)^T\\
    |n_7\rangle&=(	-\cos\theta\cos\phi,  \sin\theta\cos\phi,  \sin\theta\sin\phi)^T, \qquad
    |n_8\rangle=(\sin\theta,  \cos\theta\cos^2\phi,  (1/2)\sin{2\phi}\cos\theta)^T\\
    \end{split}
\end{equation}

\begin{equation}
    \begin{split}
    &|v_{4t}\rangle=(	-\cos^2\theta,  0, \sin^{2t}\theta)^T, \qquad
    |v_{4t+1}\rangle=(	\sin^{2t}\theta,  0, \cos^2\theta)^T\\
    &|v_{4t+2}\rangle=(	\sin\theta\cos^2\theta,  \cos^3\theta,  -\sin^{2t+1}\theta)^T, \qquad
    |v_{4t+3}\rangle=(	\sin^{2t+2}\theta,  \sin^{2t+1}\theta\cos\theta, \cos^2\theta)^T\\
    &|u_{4s}\rangle=(-(1/2)\sin{2\phi}\cos\theta,  0,  \sin^{2s+1}\theta)^T, \qquad
    |u_{4s+1}\rangle=(\sin^{2s+1}\theta,  0, (1/2)\sin{2\phi}\cos\theta
    )^T\\
    &|u_{4s+2}\rangle=(	(1/4)\sin{2\theta}\sin{2\phi},  (1/2)\sin{2\phi}\cos^2\theta,  -\sin^{2s+2}\theta)^T, \qquad
    |u_{4s+3}\rangle=(\sin^{2s+3}\theta, \sin^{2s+2}\theta\cos\theta,(1/2)\sin{2\phi}\cos\theta)^T\\    
    \end{split}
\end{equation}
with the distinguished vectors given by:

\begin{equation}
    \begin{split}
    |m_1\rangle& =(1,0,0)^T, \qquad
    |m_2\rangle=(\sin\theta, \cos\theta, 0)^T, \qquad 
    |m_3\rangle=\left(	\sin\theta, \cos\theta,-\frac{\sin^{2t+1}\theta}{\cos^2\theta}\right)^T.\\
    \end{split}
\end{equation}
Here $\theta,\phi\in(0,\pi/2)\cup (\pi/2,\pi)$ are parameters to be fixed, and $t, s \geq 1$ denote the number of repeating units from the left (from vertex $n_5$) and the right (from vertex $n_8$) respectively. 

As $t\to\infty,s\to\infty$, in the middle of the figure, we obtain

\begin{equation}
    \begin{split}
        \langle v_{4t+3}|m_3\rangle=&0, \qquad \; \text{and} \; \qquad 
        \langle u_{4s+3}|m_3\rangle=\sin^{2s+4}\theta+\sin^{2s+2}\theta\cos^2\theta
        -\frac{\sin^{2t+1}\theta\sin\phi\cos\theta\cos\phi }{\cos^2\theta}=0\\
    \end{split}
\end{equation}
satisfying the orthogonality constraints of the graph, and when in addition $\theta\to \frac{\pi}{2}$, we obtain that $\langle m_i|m_j\rangle\to1$ for $i \neq j$ showing that the construction serves as a gadget for vectors that are arbitrarily close to each other. 

For dimension $d\geq 4$, one can construct an order $(d,d-1)$ gadget for arbitrary distinguished vectors as shown in the graph in Figure \ref{special_d}. 
\begin{figure}
    \centering
    \includegraphics[width=16cm]{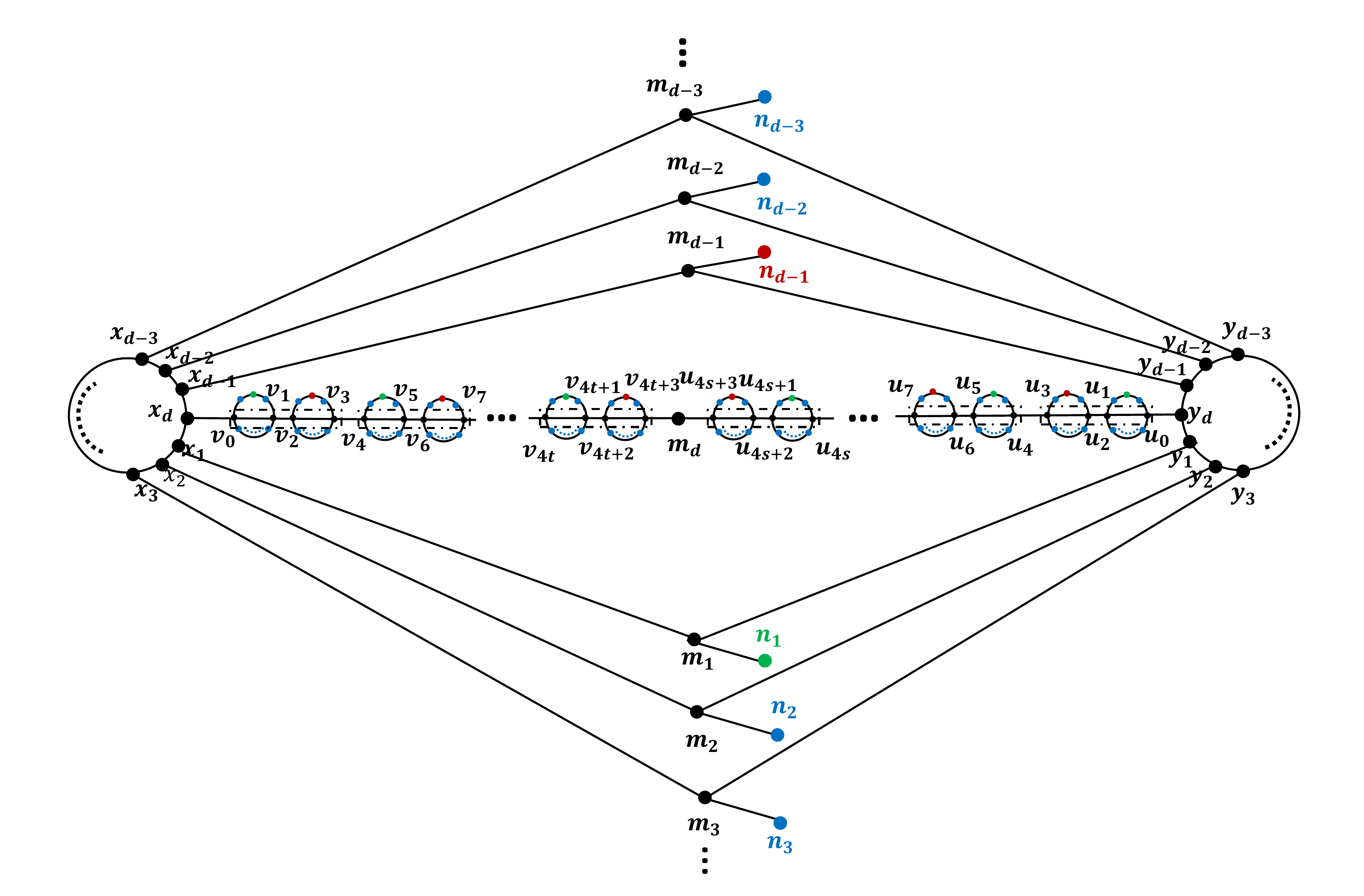}
    \caption{The construction of an order $(d,d-1)$ gadget. As the number of repeating units $t, s \to \infty$, the construction works for arbitrary distinguished vectors $|m_1 \rangle, \dots, |m_d \rangle$, i.e., $\langle m_i | m_j \rangle \to 1$.}
    \label{special_d}
\end{figure}

A faithful orthogonal representation of the graph is given by the following list of $d$-dimensional $(d\geq5)$ vectors: 

\begin{equation}
    \begin{split}
\langle m_1| &=(	1,  0,  0, 0,0, \cdots,0,  0,  0), \qquad \qquad \qquad   
\langle m_2| =(
	\sin\theta, 0,  0,0,0,  \cdots, 0,\cos\theta,  0), \\
\langle m_3| &=(	\sin\theta,  0,  0,0, 0, \cdots,\cos\theta,0,  0), \qquad
\langle m_{d-3}| =(	\sin\theta,  0,  0,\cos\theta,0,  \cdots,0,0,  0),\\
\langle m_{d-2}| &=(	\sin\theta,  0, \cos\theta, 0,0,  \cdots,0,0,  0), \qquad
\langle m_{d-1}| =(	\sin\theta,  \cos\theta,  0,0,0, \cdots,0,0,  0),\\
\langle n_1| &=(	0,  1,  0, 0,0,\cdots,0,  0,  0), \qquad \qquad \qquad
\langle n_2 | =(	0,  0,  1,0, 0, \cdots, 0,0,  0),\\
\langle n_3| &=(	0,  0,  0,0, 0, \cdots,0,1,  0), \qquad \qquad \qquad
\langle n_4| =(	0,  0,  0,0, 0, \cdots,1,0,  0),\\
\langle n_{d-3}| &=(	0,  0, 0,0,1, \cdots,0,0,  0), \qquad \qquad \quad
\langle n_{d-2}| =(	0,  0, 0, 1, 0, \cdots,0,0, 0),\\
\langle n_{d-1}| &=(	-\cos\theta,  \sin\theta,  0,0,0, \cdots,0,0,  0) \qquad  \quad
\langle x_1| =(	0,  b,  b, b,b, \cdots,b,  b,  c),\\
\langle x_2| &=(	-\cos\theta, a,  a,a, a, \cdots, a,\sin\theta,  \cos\theta), \qquad \quad
\langle x_3| =(	-\cos\theta,  a,  a,a, a, \cdots,\sin\theta,a,  \cos\theta),\\
\langle x_{d-3}| &=(	-\cos\theta,  a,  a,\sin\theta,a, \cdots,a,a,  \cos\theta), \qquad \quad
\langle x_{d-2}| =(	-\cos\theta,  a, \sin\theta, a, a, \cdots,a,a,  \cos\theta),\\
\langle x_{d-1}| &=(	-\cos\theta,  \sin\theta,  a,a,a, \cdots,a,a,  \cos\theta), \qquad \quad
\langle x_d| =(	p,  e,  e, e,e, \cdots,e,  e, q),\\
\langle y_1| &=(	0,  b',  b', b',b', \cdots,b',  b',  c'), \qquad \qquad \qquad\quad
\langle y_2| =(	-\cos\theta,  a',  a',a', a', \cdots, a',\sin\theta,  \cos\phi),\\
\langle y_3| &=(	-\cos\theta,  a',  a',a', a', \cdots,\sin\theta,a',  \cos\phi), \qquad \quad
\langle y_{d-3}| =(	-\cos\theta,  a',  a',\sin\theta,a', \cdots,a',a',  \cos\phi),\\
\langle y_{d-2}| &=(	-\cos\theta,  a',\sin\theta, a', a', \cdots,a',a',  \cos\phi), \qquad \quad
\langle y_{d-1}| =(	-\cos\theta,  \sin\theta, a',a',a', \cdots,a',a',  \cos\phi),\\
\langle y_d| &=(	p',  e',  e', e',e', \cdots,e',  e',  q'), \qquad \qquad \qquad \qquad
\langle v_{4t}| =(	-q, 0, 0,0,0, \cdots,0,0,  p\sin^{2t}\theta),\\
\langle v_{4t+1}| &=(	p\sin^{2t}\theta,  0,  0,0,0, \cdots,0,0,  q), \qquad \qquad \qquad \qquad 
\langle v_{4t+2}| =(	q\sin\theta,  q\cos\theta,  0,0,0, \cdots,0,0,  -p\sin^{2t+1}\theta),\\
\langle v_{4t+3}| &=(	p\sin^{2t+2}\theta,  p\sin^{2t+1}\theta\cos\theta,  0,0,0, \cdots,0,0, q), \qquad  
\langle u_{4s}| =(	-q',  0,  0,0,0, \cdots,0,0,  p'\sin^{2s}\theta),\\
\langle u_{4s+1}| &=(	p'\sin^{2s}\theta,  0,  0,0,0, \cdots,0,0,  q'),\qquad \qquad \qquad \qquad
\langle u_{4s+2}|=(	q'\sin\theta,  q'\cos\theta,  0,0,0, \cdots,0,0, -p'\sin^{2s+1}\theta),\\
\langle u_{4s+3}| &=(	p'\sin^{2s+2}\theta,  p'\sin^{2s+1}\theta\cos\theta,  0,0,0, \cdots,0,0,  q'),\qquad 
\langle m_d |=(	\sin\theta, \cos\theta,  0,0,0, \cdots,0,0, -\frac{p}{q}\sin^{2t+1}\theta),\\
    \end{split}
\end{equation}
where $\theta,\phi\in(0,\pi/2)\cup (\pi/2,\pi)$, and $t, s$ denote the number of repeating units from the vertex $x_d$ and $y_d$ respectively.

In order to satisfy the orthogonality constraints of the graph, we set  $a=\frac{-\sin\theta+\sqrt\Delta}{d-4};b=\frac{d-4}{d-3}\cos\theta, c=\frac{1}{d-3}\sin\theta-\sqrt\Delta;e=\frac{d-3}{(d-2)(d-4)}\cos\theta ;q=\frac{\cos^2\theta}{\frac{-1}{d-3}\sin\theta+\sqrt\Delta};p=\frac{d-3}{(d-2)(d-4)}\sin\theta+\frac{\cos^2\theta}{\frac{-1}{d-3}\sin\theta+\sqrt\Delta}+\frac{(d-3)^2}{(d-2)(d-4)^2}(-\sin\theta+\sqrt\Delta)$ and $a'=\frac{-\sin\theta+\sqrt\delta}{d-4};b'=\frac{d-4}{d-3}\cos\phi, c'=\frac{1}{d-3}\sin\theta-\sqrt\delta;e'=\frac{d-3}{(d-2)(d-4)}\cos\phi ; q'=\frac{\cos^2\phi}{\frac{-1}{d-3}\sin\theta+\sqrt\delta};p'=\frac{d-3}{(d-2)(d-4)}\sin\theta\frac{\cos\phi}{\cos\theta}+\frac{\cos^2\phi}{\frac{-1}{d-3}\sin\theta+\sqrt\delta}\frac{\cos\phi}{\cos\theta}+\frac{(d-3)^2}{(d-2)(d-4)^2}(-\sin\theta+\sqrt\delta)\frac{\cos\phi}{\cos\theta}$, 
where $\Delta=\sin^2\theta-2(d-4)\cos^2\theta\geq 0;
\delta=\sin^2\theta-(d-4)(\cos^2\theta+\cos^2\phi)\geq 0$. Remark that the vectors $|n_2\rangle,|n_3\rangle, \cdots ,|n_{d-2}\rangle$ appear in every one of the repeating units and thus the blue vertices in the figures must be identified with each other, while $|n_1\rangle$ (the green vertices) and $|n_{d-1}\rangle$ (the red vertices) appear in alternate repeating units. 

As $t, s\to\infty$, we obtain that

\begin{equation}
    \begin{split}
      \langle v_{4t+3}|m_d\rangle&=0\\
      \langle u_{4s+3}|m_d\rangle&=p'\sin^{2s+3}\theta+p'\sin^{2s+1}\theta\cos^2\theta-\frac{pq' }{q}\sin^{2t+1}\theta=0.
    \end{split}
\end{equation}
And if furthermore $\theta\to\frac{\pi}{2}$, we obtain that

\begin{equation}
    \langle m_i|m_j\rangle_{i\neq j}\to1,
\end{equation}
that is, the distinguished vectors tend to converge towards each other. 

We omit the orthogonal representation for $d=4$ here. The vectors have a similar form, but the coefficient needs to be slightly adjusted.

%\end{itemize}
\subsection*{Supplementary Note 2: Using higher-order gadgets to construct Kochen-Specker proofs and state-independent contextuality proofs}\label{App:SIC-const} 
The construction in dimension $d$ of order $(k,k-1)$ gadgets with $2 \leq k \leq d$ from the previous section can be used to efficiently build Kochen-Specker sets as well as state-independent contextual (SI-C) sets of the Yu-Oh type. 

Fix a value of $k$ in the range $\{2,\dots, d\}$. 

\begin{construction}\label{cos_KS}
The gadgets of order $(k,k-1)$ can be used as building blocks (together with a set of bases) to construct Kochen Specker proofs in dimension $d$.
\end{construction}
\begin{itemize}
    \item [Step $1$]
    We begin with $k$ bases sets in dimension $d$, denoted as $B_1, B_2, \dots, B_k$. We choose these sets such that no two vectors in different bases sets are identical or orthogonal to each other (one can do this by picking a single basis set $B_1$ and applying a suitable unitary matrix $U_d$ to $B_1$). 
    \item [Step $2$]
    Construct all possible sets $S_i = \big\{ |v_{B_p}^q \rangle \big\}$ with $p \in [k] := \{1,\dots,k\}$ and $q \in [d]$, obtained by choosing a single vector $|v_{B_p}^q \rangle$ from each basis set $B_p$. In total, we thus have $d^k$ sets $S_i$ with $|S_i| = k$ for each $i \in [d^k]$.
    \item [Step $3$]
    Construct for each $i \in [d^k]$ an order $(k,k-1)$ gadget in dimension $d$ with the vectors in the set $S_i$ being the distinguished vectors. Such a gadget can be built following the construction in the previous section, notice that an order $(k,k-1)$ gadget in dimension $k$ serves also as an order $(k,k-1)$ gadget in all dimensions $d \geq k$ by the addition of computational basis vectors $|k+1\rangle, \dots, |d\rangle$. 
    
    \end{itemize}

    All the vectors in $B_1\cup B_2\cup \cdots \cup B_k \cup\mathcal{S}$ form a KS proof, where $\mathcal{S}$ denotes all the high-order gadgets used in Step $3$. This follows from the fact that assigning a single value $1$ to each of the bases $B_1, \dots, B_{k-1}$ forces all the vectors in the basis set $B_k$ to be assigned value $0$ giving rise to a contradiction. 
\qed

\begin{construction}
The gadgets of order $(k,k-1)$ can be used as building blocks (together with a set of bases) to construct general SI-C sets in dimension $d$. 

\end{construction}

To realize the construction we need the following Theorem.

\begin{theorem}\label{state_ind}
Set 
%\begin{equation}
$n = \begin{cases} 
      \lceil\log_2 \frac{d-1}{2}\rceil, & $d$ \; \text{is odd} \\
      \lceil\log_2 \frac{d-2}{2}\rceil, & $d$ \; \text{is even}
   \end{cases}.$
%\end{equation}
Let $r \geq 4$ be an even integer. There exist $r \cdot 2^n$ distinct unit vectors $|u_i \rangle$ in dimension $d$ satisfying

\begin{equation}\label{SIC}
\sum_{i=1}^{r\cdot 2^n}|u_{i}\rangle\langle u_{i}|=\frac{r\cdot 2^n}{d} \mathbb{1}_d
\end{equation}
\end{theorem}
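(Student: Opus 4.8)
The plan is to recognise Eq.~\eqref{SIC} as the statement that the vectors $\{|u_i\rangle\}$ form a \emph{unit-norm tight frame} for $\mathbb{C}^d$ with frame constant $N/d$, where I write $N := r\cdot 2^n$. Taking the trace of both sides shows that the only global constraint is automatically satisfied, $N = \frac{N}{d}\cdot d$, so the real content is that the second-moment operator $\sum_i |u_i\rangle\langle u_i|$ is \emph{isotropic} (proportional to the identity) for a configuration of exactly $N$ distinct unit vectors. First I would check that the prescribed cardinality is admissible, i.e.\ $N \ge d$: for $d$ odd one has $2^n \ge (d-1)/2$ and hence $N \ge 4\cdot(d-1)/2 = 2(d-1) \ge d$, while for $d$ even $2^n \ge (d-2)/2$ gives $N \ge 2(d-2) \ge d$ for $d \ge 4$. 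Thus a tight frame of this size is not excluded on dimensional grounds.

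The cleanest way to produce such a frame explicitly is a \emph{harmonic (Fourier) frame}. Put $\omega = e^{2\pi i/N}$, fix $d$ distinct frequencies $f_1,\dots,f_d \in \{0,1,\dots,N-1\}$, and define for each $j \in \{0,\dots,N-1\}$ the unit vector
\begin{equation}
|u_j\rangle = \frac{1}{\sqrt{d}}\bigl(\omega^{f_1 j},\,\omega^{f_2 j},\,\dots,\,\omega^{f_d j}\bigr)^{T}.
\end{equation}
Each $|u_j\rangle$ has norm one, and the $(a,b)$ matrix element of the frame operator is
\begin{equation}
\Bigl(\sum_{j=0}^{N-1}|u_j\rangle\langle u_j|\Bigr)_{ab} = \frac{1}{d}\sum_{j=0}^{N-1}\omega^{(f_a - f_b)j} = \frac{N}{d}\,\delta_{f_a,f_b} = \frac{N}{d}\,\delta_{ab},
\end{equation}
by orthogonality of characters (the geometric sum vanishes unless $f_a = f_b$). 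This reproduces Eq.~\eqref{SIC} exactly for \emph{every} $N \ge d$, and in particular for $N = r\cdot 2^n$. Distinctness I would secure by a convenient choice of frequencies: taking $f_a = a-1$ makes the second entry of $|u_j\rangle$ equal to $\omega^{j}/\sqrt{d}$, which already separates all indices $j$ since $N \ge d \ge 2$, yielding $N$ distinct unit vectors.

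For the downstream applications one typically wants a \emph{real} orthogonal representation, and this is where I expect the main obstacle—and the origin of the specific value of $n$ and the evenness of $r$—to lie. The complex harmonic frame makes \emph{existence} for any $N \ge d$ immediate, but forcing the vectors to be real while keeping the count exactly $r\cdot 2^n$ requires a concrete construction rather than an abstract argument. The natural route is to take the orbit $\{g|v\rangle : g \in G\}$ of a generic real unit vector under a finite group $G \le O(d)$ that acts \emph{irreducibly} on $\mathbb{R}^d$: Schur's lemma then forces $\sum_g g|v\rangle\langle v|g^{T} = \frac{|G|}{d}\,\mathbb{1}_d$, and a base vector with trivial stabiliser makes the orbit consist of $|G|$ distinct unit vectors. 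The power of two $2^{n+1}$ (just exceeding $d$) is exactly what supplies a clean real isotropic building block—e.g.\ tensor products of planar regular polygons, or the orbit of an irreducibly-acting real $2$-group—while the evenness of $r$ guarantees closure under the pairing $|v\rangle \mapsto \overline{|v\rangle}$ needed to stay real.

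Concretely, I would assemble the final configuration as a \emph{union} of orbits, using the fact that unit-norm tight frames are closed under disjoint union (their frame operators simply add): each individual block need not be isotropic, only the union need be, so I have the freedom to reach the exact total $r\cdot 2^n$. The delicate point to nail down is therefore to exhibit, in \emph{every} dimension $d$, an explicit irreducibly-acting real group (or union of orbits) of precisely order $r\cdot 2^n$ together with base vectors of trivial stabiliser, so that the resulting $r\cdot 2^n$ vectors are genuinely distinct. Once this is in place the theorem follows, and the consequence exploited afterwards is immediate: by Eq.~\eqref{SIC}, every quantum state $|\psi\rangle$ gives $\sum_i \langle\psi|u_i\rangle\langle u_i|\psi\rangle = N/d$, a state-independent value, which is what makes the union of gadgets built on these vectors state-independently contextual.
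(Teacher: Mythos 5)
Your harmonic-frame argument is a complete and correct proof of the theorem \emph{as stated}, and it takes a genuinely different route from the paper. The paper proves the statement by an explicit real construction: the $r\cdot 2^n$ vectors are organised into $2^n$ blocks of $r$ vectors each, where the $j$-th vector of a block has the form $(q\cos(2\pi j/r),\dots,q\sin(2\pi j/r),\dots,(-1)^j p)^T$ (with an additional constant entry $p$ when $d$ is even), the $\cos/\sin$ entries of the $i$-th block are modulated by the signs of the $i$-th column of a matrix built from $\frac{d-1}{2}$ (resp.\ $\frac{d-2}{2}$) rows of a $2^n\times 2^n$ Hadamard matrix, and the identity $\sum_i |u_i\rangle\langle u_i| = \frac{r\cdot 2^n}{d}\mathbb{1}_d$ is then verified entry by entry: off-diagonal terms vanish by orthogonality of the Hadamard rows together with trigonometric identities such as $\sum_{k=1}^{r}\cos(4\pi k/r)=0$ and the vanishing of alternating sums like $\sum_{k=1}^{r}(-1)^k\cos(2\pi k/r)$ (this is where evenness of $r\geq 4$ enters), and equality of the diagonal entries fixes $q=\sqrt{2/d}$. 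This pins down exactly the two features you flagged as mysterious: $n$ is chosen so that a $2^n\times 2^n$ Hadamard matrix has at least $\frac{d-1}{2}$ (resp.\ $\frac{d-2}{2}$) rows to distribute, and $r$ must be even to kill the alternating cross terms. Notably, your guess of ``tensor products of planar regular polygons'' modulated by a real sign group is essentially what the paper does, though you did not carry it out. The trade-off between the two approaches: your character-orthogonality argument is shorter, works for \emph{every} $N\geq d$, and needs no case distinction, but produces complex vectors; the paper's construction is longer and parameter-specific but produces vectors with real entries, consistent with the real orthogonal representations used in the gadget constructions elsewhere in the paper. Since the theorem and the paper's ambient definitions place vectors in $\mathbb{C}^d$ and nowhere demand real entries, your complex frame does prove the stated result (and your choice $f_1=0$ even makes the vectors pairwise non-parallel, not merely distinct, since all first entries equal $1/\sqrt d$); consequently your closing group-orbit sketch, which you correctly present as incomplete, is not needed for the theorem itself --- it would only become the missing piece if one insisted on a real realisation with exactly $r\cdot 2^n$ vectors.
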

\begin{proof}

Firstly, we show the construction of these $r \cdot 2^n$ distinct unit vectors $|u_i \rangle$. Then we prove the above statement. We just consider the case that $d$ is an odd integer in the following construction and briefly explain the case that $d$ is even at the end of the proof. 

\textit{Construction.}
To give the explicit form of these $r \cdot 2^n$ distinct unit vectors $|u_i \rangle$, we consider a matrix $O$ of size $d\times r\cdot 2^{n}$ whose $i$-th column is $|u_i\rangle$. Thus, we have 

\begin{equation}
\sum_{i=1}^{r\cdot 2^{n}}|u_{i}\rangle\langle u_{i}|=O\cdot O^T
\end{equation}

\begin{itemize}
\item [Step $1$]
Divide the matrix $O$ into blocks by each $r$ columns and term every block matrix $o_i$. So the size of each block matrix $o_i$ is $d\times r$, and there are $2^n$ block matrices in total, i.e., $i\in[2^{n}] $.
The $j$-th ($j\in[r]$) column in the block matrix $o_i$ is $|u_{(i-1)\cdot r+j}\rangle$. Construct it as:

\begin{equation}\label{SIC-vector}
    (\underbrace{q\cos(\frac{j}{r}\cdot2\pi),q\cos(\frac{j}{r}\cdot2\pi),\cdots,q\cos(\frac{j}{r}\cdot2\pi)}_{\frac{d-1}{2} },\underbrace{q\sin(\frac{j}{r}\cdot2\pi),q\sin(\frac{j}{r}\cdot2\pi),\cdots,q\sin(\frac{j}{r}\cdot2\pi)}_{\frac{d-1}{2}},(-1)^j\cdot p)^{T}
\end{equation}
To satisfy the normalization condition, set $p:=\sqrt{1-\frac{q^2(d-1)}{2}}$.
We set the value of $q$ later.
\item [Step 2]
Set $n:=\lceil\log_2 \frac{d-1}{2}\rceil$, there exist a Hadamard matrix $H_{{2^n\times2^n}}$ of size $2^n\times2^n$. We take arbitrary  $\frac{d-1}{2}$ rows from this Hadamard matrix $H_{2^n\times2^n}$ and arrange them in rows (in arbitrary order) to form a new matrix $H'_{\frac{d-1}{2}\times2^n}$ of size $\frac{d-1}{2}\times2^n$. 

 \item [Step 3]
 For each vector $|u_{(i-1)\cdot r+j}\rangle,j\in[r],i\in[2^n]$ in $i$-th block matrix $o_i,i\in[2^n]$ (i.e., the $j$-th ($j\in[r]$) column in each block matrix  $o_i,i\in[2^n]$), take the sign of the $i$-th column of $H'$ as a sign vector $s_i$ (size $1\times \frac{d-1}{2}$). Set the sign $s_i$ to the first $\frac{d-1}{2}$ entries ( i.e., the entries that contain $\cos$ function in \eqref{SIC-vector} ) and the second $\frac{d-1}{2}$ entries ( i.e., the entries that contain $\sin$ function in \eqref{SIC-vector}) of each vector  $|u_{(i-1)\cdot r+j}\rangle,j\in[r],i\in[2^n]$ in $i$-th block matrix $o_i,i\in[2^n]$. Keep the sign of the last entry ( i.e., $(-1)^j\cdot p$).
\end{itemize}

With the above construction of these $r \cdot 2^n$ distinct unit vectors $|u_i \rangle$, we can now prove the main statement \eqref{SIC} of this theorem by showing $O\cdot O^T$ is a diagonal matrix, and diagonal elements are identical.

Since $O\cdot O^\mathsf{T}$ is symmetric, we calculate the following cases:

\begin{itemize}
    \item [1.]
    $\forall i<j \in[\frac{d-1}{2}]$:
    
    \begin{equation}
        (O\cdot O^{T})_{i,j}=q^2\left(\sum_{k=1}^{r}\cos^2(\frac{k}{r}\cdot 2\pi)\right)\cdot(H'_{i,\_}\cdot {H'_{j,\_}} ^{T})=0
    \end{equation}
    where $H'_{i,\_}$ is the $i$-th row of matrix $H'_{\frac{d-1}{2}\times2^n}$.
     \item [2.]
    $\forall i<j \in\{\frac{d-1}{2}+1,\cdots,d-1\}$:
    
    \begin{equation}
        (O\cdot O^{T})_{i,j}=q^2\left(\sum_{k=1}^{r}\sin^2(\frac{k}{r}\cdot 2\pi)\right)\cdot(H'_{i-\frac{d-1}{2},\_}\cdot {H'_{j-\frac{d-1}{2},\_}}^{T})=0
    \end{equation}
     \item [3.]
    $\forall i \in[\frac{d-1}{2}]$ and $j\in\{\frac{d-1}{2}+1,\cdots,d-1\}$:
    
    \begin{equation}
        (O\cdot O^{T})_{i,j}=q^2\left(\sum_{k=1}^{r}\cos(\frac{k}{r}\cdot 2\pi)\sin(\frac{k}{r}\cdot 2\pi)\right)\cdot(H'_{i,\_}\cdot {H'_{j-\frac{d-1}{2},\_}}^{T})=0
    \end{equation}
    \item [4.] $i\in[\frac{d-1}{2}]$ and $j=n$:
    
    \begin{equation}
        (O\cdot O^{T})_{i,j}=pq\left(\sum_{k=1}^{r}\cos(\frac{k}{r}\cdot 2\pi)\right)\cdot\sum_{k=1}^{2^n} H'_{i,k}=0
    \end{equation}
    \item [5.] $i\in\{\frac{d-1}{2}+1,\cdots,d-1\}$ and $j=n$:
    
    \begin{equation}
        (O\cdot O^{T})_{i,j}=pq\left(\sum_{k=1}^{r}\sin(\frac{k}{r}\cdot 2\pi)\right)\cdot\sum_{k=1}^{2^n} H'_{i-\frac{d-1}{2},k}=0
    \end{equation}
    \item [6.] $i=j \in[\frac{d-1}{2}]$:
    
        \begin{equation}
        \begin{split}
        (O\cdot O^{T})_{i,j}&=q^2\left(\sum_{k=1}^{r}\cos^2(\frac{k}{r}\cdot 2\pi)\right)\cdot 2^n\\
        &=q^2\left(\sum_{k=1}^{r}\frac{1+\cos(\frac{2k}{r}\cdot 2\pi)}{2}\right)\cdot 2^n\\
        &=q^2\cdot \frac{r}{2}\cdot2^{n}+0=q^2 r\cdot2^{n-1}
        \end{split}
    \end{equation}
    \item [7.] $i=j \in\{\frac{d-1}{2}+1,\cdots,d-1\}$:
    
        \begin{equation}
        \begin{split}
        (O\cdot O^{T})_{i,j}&=q^2\left(\sum_{k=1}^{r}\sin^2(\frac{k}{r}\cdot 2\pi)\right)\cdot 2^n\\
        &=q^2\left(\sum_{k=1}^{r}\frac{1-\cos(\frac{2k}{r}\cdot 2\pi)}{2}\right)\cdot 2^n\\
        &=q^2\cdot \frac{r}{2}\cdot2^{n}+0=q^2 r\cdot2^{n-1}
        \end{split}
    \end{equation}
    \item [8.] $i=j=n$
    
    \begin{equation}
        (O\cdot O^{T})_{i,j}=p^2r\cdot2^{n}=(1-\frac{q^2(d-1)}{2})r\cdot2^{n}
    \end{equation}
\end{itemize}
To make all diagonal terms identical, we have:

\begin{equation}
    q^2 r\cdot2^{n-1}=(1-\frac{q^2(d-1)}{2})r\cdot2^{n} \Rightarrow q=\sqrt{\frac{2}{d}}
\end{equation}
So the diagonal terms are $\frac{r\cdot2^n}{d}$. 
In summary, $\sum_{i=1}^{r\cdot 2^{n}}|u_{i}\rangle\langle u_{i}|=O\cdot O^{T}=\frac{r\cdot2^n}{d}\mathbb{1}_d$. 

The construction of vectors and proof of theorem for the case that $d$ is an even integer is similar to the above case that $d$ is odd. The differences are: 
1. At the first step of the construction, the $j$-th ($j\in[r]$) column $|u_{(i-1)\cdot r+j}\rangle$ in block matrix $o_i$ is constructed as:

    \begin{equation}
 (\underbrace{q\cos(\frac{j}{r}\cdot2\pi),q\cos(\frac{j}{r}\cdot2\pi),\cdots,q\cos(\frac{j}{r}\cdot2\pi)}_{\frac{d-2}{2} },\underbrace{q\sin(\frac{j}{r}\cdot2\pi),q\sin(\frac{j}{r}\cdot2\pi),\cdots,q\sin(\frac{j}{r}\cdot2\pi)}_{\frac{d-2}{2}},1\cdot p,(-1)^j\cdot p)^{T}
\end{equation}
To satisfy the normalization condition, set $p:=\sqrt{\frac{1}{2}-\frac{q^2(d-2)}{4}}$. And the value of $q$ would be the same as above $q=\sqrt{\frac{2}{d}}$.

2. At the second step of the construction, set $n:=\lceil\log_2 \frac{d-2}{2}\rceil$ and take arbitrary  $\frac{d-2}{2}$ rows from this Hadamard matrix $H_{2^n\times2^n}$ to form the new matrix $H'_{\frac{d-2}{2}\times2^n}$ of size $\frac{d-2}{2}\times2^n$.  So the size of the sign vector $s_i,i\in[2^n]$ in step 3 should be $1\times \frac{d-2}{2}$.

\end{proof}

Fix a value of $k$ in the range $\{2,\dots, d\}$. We now use the above theorem along with order $(k,k-1)$ gadgets from the previous section to show a construction of SI-C sets.

\begin{itemize}
    \item [Step $1$] Choose an even integer $r> \max \left\{\frac{d(k-1)}{2^n}, 4 \right\}$ and construct the set of $r \cdot 2^n$ unit vectors $|u_i \rangle$ from Theorem \ref{state_ind}.
    \item [Step $2$]
    Construct all $\binom{r\cdot 2^{n}}{k}$ possible sets $S_i$ of size $k$ built out of the $r \cdot 2^n$ unit vectors $|u_i \rangle$. 
    \item [Step $3$]
    Construct sets $S'_i$ from $S_i$ by deleting all the mutually orthogonal vectors in the set $S_i$ so that $|S'_i| \leq k$. Construct for each $i$, an order $(|S'_i|, |S'_i|-1)$ gadget in dimension $d$ with the vectors in $S'_i$ constituting the distinguished vectors of the gadget. Such a gadget can be built following the construction in the previous section as before.

\end{itemize}

    The $r\cdot 2^{n}$ unit vectors $|u_i \rangle$ from Theorem \ref{state_ind} together with the set $\mathcal{S}$ of all higher-order gadgets used in Step $3$ constitute a state-independent contextuality proof. This follows from the fact that in any $\{0,1\}$-assignment $f$, it holds that
    
     \begin{equation}
     \sum_{i=1}^{r\cdot 2^{n}} f\left(\left|u_{i}\right\rangle\right) \leq k-1.
     \end{equation}
    On the other hand, in quantum theory, every state $|\psi \rangle \in \mathbb{C}^d$ achieves 
    
    \begin{equation}
     \sum_{i=1}^{r\cdot 2^{n}} \Tr\left[|u_{i}\left\rangle\right\langle u_{i}||\psi\left\rangle\right\langle \psi|\right]=\frac{r\cdot 2^n}{d}.
 \end{equation}
 We see that the quantum value is greater than the classical value for $r$ chosen as in Step $1$.
\qed
 \subsection*{Supplementary Note 3: Order $(k,k-1)$ gadgets as induced subgraphs in every Kochen-Specker set}\label{App}

\begin{theorem}
Every non-\{0,1\}-colorable graph $G$ contains a gadget of order $(k,k-1)$ for some $k$ satisfying $2\leq k \leq \omega(G)$.
\end{theorem}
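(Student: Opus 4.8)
The plan is to exploit the reformulation of $\{0,1\}$-colorability in terms of independent transversals: a $\{0,1\}$-coloring is exactly an independent set $S$ (the $1$-valued vertices) that meets every maximum clique of $G$, since independence already enforces $\sum_{v\in C}f(v)\le 1$ on every clique while Completeness demands $|S\cap C|=1$ on maximum cliques. Thus $G$ being non-$\{0,1\}$-colorable means no independent set meets all maximum cliques. My goal is to extract from this failure a distinguished independent set $D$ together with a maximum clique $C$ such that $H:=G[D\cup C]$ is a gadget of order $(|D|,|D|-1)$, with $D$ the distinguished set.

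First I would run a greedy procedure: maintain an independent set $S$ of $1$-valued vertices, and while some maximum clique is unmet, pick such a clique $C$ and, if possible, promote to $1$ a vertex of $C$ having no neighbour in $S$. This keeps $S$ independent and terminates. Because $G$ is non-colorable, the run can never end with all maximum cliques met (that would yield a valid coloring), so it must halt \emph{stuck}: there is a maximum clique $C$ with $S\cap C=\emptyset$ in which every vertex already has a neighbour in $S$, i.e. $S$ dominates $C$. I then replace $S$ by a minimal dominating subset $D\subseteq S$ of $C$; it stays independent and disjoint from $C$. Two size bounds follow easily: $|D|\ge 2$, since a single vertex dominating all of $C$ would be adjacent to the whole $\omega(G)$-clique $C$ and enlarge it; and $|D|\le\omega(G)$, since minimality assigns to each $w\in D$ a \emph{private} vertex $v_w\in C$ whose only neighbour in $D$ is $w$, giving an injection $D\hookrightarrow C$.

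Next I would verify that $H=G[D\cup C]$ is the claimed gadget. Since $D$ is independent, every clique of $H$ uses at most one vertex of $D$, and no $w\in D$ is adjacent to all of $C$ (that would contradict maximality of $C$), so $\omega(H)=\omega(G)=:d$ and the maximum cliques of $H$ are exactly $C$ together with the cliques $\{w\}\cup(C\setminus\{u_w\})$ for those $w\in D$ having a unique non-neighbour $u_w$ in $C$. The forbidden direction (no coloring makes all of $D$ equal $1$) is immediate: a $1$ on all of $D$ would, by domination, leave every vertex of $C$ with a $1$-valued neighbour, making it impossible to place the single required $1$ inside the maximum clique $C$. For the positive direction I would show that deleting any $w_0\in D$ leaves a coloring that is $1$ on all of $D\setminus\{w_0\}$: take the private vertex $v=v_{w_0}$ of $w_0$, set $f=1$ on $(D\setminus\{w_0\})\cup\{v\}$ and $0$ elsewhere, and check each maximum clique receives exactly one $1$. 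Monotonicity then covers every distinguished subset of size $\le |D|-1$, so $H$ is a gadget of order $(|D|,|D|-1)$ with $2\le|D|\le\omega(G)$.

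The main obstacle is this last verification: one must confirm that the explicit assignment is a legitimate $\{0,1\}$-coloring on \emph{all} maximum cliques of $H$. This splits into the cliques through $w_0$ (where $w_0\mapsto 0$ but the private vertex $v$ survives because $w_0\sim v$ forces $v\neq u_{w_0}$) and the cliques through some $w\in D\setminus\{w_0\}$ (where $w\mapsto1$ but $v$ is excluded because $w\not\sim v$ forces $v=u_w$); both cases hinge on the private-vertex property and the single-non-neighbour description of the maximum cliques. Finally, I would remark that the statement is purely combinatorial, but when $G$ is an orthogonality graph realized faithfully in dimension $d=\omega(G)$ (as for a KS set), the restriction of that representation to $D\cup C$ is again faithful and, since $d^*(H)\ge\omega(H)=d$, yields $d^*(H)=\omega(H)=d$, so $H$ also meets the dimension requirement in the definition of a gadget; this recovers Theorem \ref{thm:kk-1gadget_in_KS} as the special case of KS sets.
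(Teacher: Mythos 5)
Your proof is correct and follows essentially the same route as the paper's: a greedy attempt at a $\{0,1\}$-coloring must get stuck at a maximum clique $C$ that is dominated by the independent set of $1$-valued vertices; passing to a minimal dominating subset $D$ and taking the induced subgraph $H = G[D \cup C]$ yields the order $(|D|,|D|-1)$ gadget, with the same two size bounds ($|D|\geq 2$, else $C$ would extend to a clique of size $\omega(G)+1$; $|D|\leq \omega(G)$ by minimality). That said, two of your refinements genuinely tighten the argument relative to the paper. First, you dispense with the paper's backtracking algorithm: your forward greedy suffices, since termination without getting stuck would itself produce a valid coloring (an independent set meeting every maximum clique meets each in exactly one vertex), so non-colorability forces the stuck state directly. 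Second, and more substantively, your verification of the positive direction is complete where the paper's is sketchy: the paper only argues that $C$ itself can receive a $1$ when a proper subset of $D$ is set to $1$, whereas you identify all maximum cliques of $H$ (namely $C$ and the cliques $\{w\}\cup(C\setminus\{u_w\})$ for $w$ with a unique non-neighbour $u_w$ in $C$) and check the explicit assignment $1$ on $(D\setminus\{w_0\})\cup\{v_{w_0}\}$ against each of them via the private-vertex dichotomy ($v_{w_0}\neq u_{w_0}$ since $w_0\sim v_{w_0}$, while $v_{w_0}=u_w$ for $w\neq w_0$). Your closing remark on the faithful-dimension requirement $d^*(H)=\omega(H)=d$, inherited by restricting a faithful representation of a KS set to the induced subgraph, likewise covers a clause of the gadget definition that the paper's proof leaves implicit.
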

\begin{proof}
In what follows we will use brute-force and inefficient greedy algorithm for $\{0,1\}$-coloring. A way to color any graph is to
\begin{enumerate}
\item identify the list of cliques of size $\omega(G)$, ${\cal C}=\{C_1,...,C_l\}$ in $G$.
\item for $i=1$ to $l$ do the following:
        \begin{enumerate}
        \item repeat until a vertex in $C_i$ is set to $1$ or all the vertices can not be set value $1$:
        Choose an as yet unvisited vertex $v\in C_i$ and check if it can be attributed $1$ at the current stage of execution. If not, mark the vertex as visited and go to (a). If yes, continue.
        
        \item if some vertex in $C_i$ has been assigned value $1$, and $i\neq l$ continue step $2$ with $i:=i+1$.
        \item if some vertex in $C_i$ has been assigned value $1$ and $i=l$ stop the algorithm and return SUCCESS.
        \item if all the vertices of $C_i$ are marked as visited and $i\neq 1$ unmark all of them and set $i:=i-1$. 
        \item if all the vertices of $C_i$ are marked as visited and $i=1$ stop the algorithm and return FAILURE.
        \end{enumerate}
\end{enumerate}
We first observe, that the graph is non-$\{0,1\}$-colorable if and only if the the condition of the step $(d)$ is satisfied at some point of its execution.

For the (only if) it is clear that if the condition of step $(d)$ is never satisfied during execution of the algorithm, it terminates with SUCCESS. That is to exactly one vertex of each of all the $l$ maximal cliques in $G$
it sets number $1$. In such a case setting all other vertices $0$ gives then a $\{0,1\}$-coloring of a graph, hence $G$ is $\{0,1\}$-colorable.

For (if) we will argue that the condition of the step $(e)$ is not satisfied (resulting in final FAILURE step) if the condition of the step $(d)$ is not satisfied beforehand. 
This is because the condition of the step $(e)$ is satisfied only when $i=1$ and all vertices of $C_1$ have been visited. It can not happen in the first execution of $i=1$ in step $2$ as any (of at least  $2$) of vertices of $C_1$, can be set $1$ with no conflict with the $\{0,1\}$-coloring rule. Hence the index $i$ in execution of this algorithm has to be at least equal to $2$ at some point. The only way then to satisfy later condition of the step $(e)$ is to lower $i$ by at least $1$ at some point of execution. And the only step in which $i$ can be lowered is the step $(d)$.
Hence the condition of the step $(d)$ must be satisfied at least once before the condition of the step $(e)$ is satisfied, that is if the graph $G$ is not $\{0,1\}$-colorable i.e. when algorithms has to terminate with FAILURE.

Let now $G$ be not $\{0,1\}$-colorable. The above
algorithm has to report FAILURE i.e. condition of the step $(e)$ and hence also of the step $(d)$ has to be satisfied at some point as argued above. To be concrete, consider now the lowest index $2\leq j\leq l$ such that the condition of the step $(d)$ is satisfied with index of the currently considered clique $C_i$ equal $i=j$.
Due to definition of 
step $(d)$, the clique $C_j$ can not be attributed $1$ according to $\{0,1\}$-coloring. The only way so that the $\{0,1\}$-coloring rules are not satisfied when any of $v\in V_{C_j}$ gets $1$, is that every vertex of the clique $C_j$ is adjacent to some vertex that have been set $1$ during previous steps of this execution of the algorithm. 

In other words, that there exists
set $D=\{w_1,...,w_s\}\subset V$, such that dominates the clique $C_j$ and these vertices got $1$ in previous steps. Note that $s=j-1$, as there were $j-1$ cliques which got some vertex set to $1$ with no contradiction with $\{0,1\}$-coloring until the first contradiction with $\{0,1\}$-coloring happened at $i=j$. Since $j$ is the lowest value of the index $1\leq i\leq l$ when the $\{0,1\}$-coloring rule is broken, in previous steps of this execution there were no contradiction to $\{0,1\}$-coloring. This means that the set $D$ is an independent set in $G$ (only non-adjacent vertices could have been given $1$ with no contradiction to $\{0,1\}$-coloring).

Consider then a subset $D'$ of $D$ which is minimal with the property that it dominates $C_j$ in $G$. Further consider an induced subgraph $H< G$ with the vertex set $V_H:=D'\cup V_{C_j}$ and edges from $G$. We claim now
that this set is a $01$-gadget of order $(k,k-1)$ for some $k$ in the range $2\leq k\leq \omega(G)$. 

We first note that the points $(i)$ and $(ii)$ from the definition of order $(m,k)$ gadget are satisfied for $H$. To see the latter note first that the vertices from $D'$ form independent set. However when
all of them are set to $1$ the clique $C_j$ which is dominated by these vertices can not have any $1$ which contradicts the $\{0,1\}$-coloring rules.
On the other hand, if any proper subset $D''\subsetneq D'$ is set $1$ some of the vertices of $C_j$ would not be adjacent to vertices from $D''$ and hence one can attribute
to it $1$ satisfying $\{0,1\}$-coloring for this graph. What remains to show is that $2\leq|D'|\leq \omega(G)$. To see the first inequality note, that there was $|D'|=1$, then $D'=\{v_0\}$ such that $v_0$ is adjacent to all the vertices of $C_j$. However then $\{v_0\}\cup V_{C_j}$ would form a clique of size $\omega(G)+1$ which contradicts the fact that
the size of maximal clique in $G$ is $\omega(G)$.
Thus by contradiction $|D'|\geq 2$. On the other
hand, since $D'$ was taken to be the minimal
graph dominating $C_j$ with $\omega(G)$ vertices,
it can not be of size more than $\omega(G)$ or 
else it would not be minimal (there are at most
$\omega(G)$ vertices to be dominated in $C_j$). This proves the claim.

\end{proof}

\subsection*{Supplementary Note 4: A generalisation of gadgets with other forbidden value assignments}
To show the precise steps of building a gadget for every forbidden set $\mathcal{H} \subset \{0,1\}^m$ for arbitrary $m \geq 2$
, we first make the following observation to simplify the task at hand.
\begin{lemma}
If for every forbidden set $\mathcal{H} \subset \{0,1\}^m$ of cardinality $|\mathcal{H}| = 1$, and for every set of mutually non-orthogonal vectors $I = \{|v_1 \rangle, \dots, |v_m \rangle \}$, there exists a gadget in dimension $d$ for $\mathcal{H}$ with $I$ as the distinguished set of vectors, then there also exists a gadget in dimension $d$ for forbidden sets $\mathcal{H} \subset \{0,1\}^m$ with $|\mathcal{H}| > 1$.   
\end{lemma}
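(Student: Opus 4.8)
The plan is to build the gadget for a forbidden set $\mathcal{H} = \{h_1, \dots, h_t\}$ with $t = |\mathcal{H}| > 1$ (and necessarily $t < 2^m$, since a $\{0,1\}$-colorable gadget cannot forbid every one of the $2^m$ assignments) by \emph{gluing} together the $t$ singleton gadgets furnished by the hypothesis along their shared distinguished set. Concretely, I would first fix a faithful orthogonal representation of $I = \{|v_1\rangle, \dots, |v_m\rangle\}$ in $\mathbb{C}^d$. For each $j \in [t]$ the hypothesis supplies a gadget $G_j$ in dimension $d$ with forbidden set the singleton $\{h_j\}$ and with $I$ as its distinguished set. I then form $G_{\mathcal{H}}$ as the disjoint union of $G_1, \dots, G_t$ in which the distinguished vertices $v_1, \dots, v_m$ are identified across all copies while every other vertex and every edge is kept distinct; thus the constituents share exactly the vertices of $I$ and nothing else. (Equivalently this is an iteration of the $|\mathcal{H}| = 1$ case, so one could also argue by induction on $|\mathcal{H}|$.)

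The key combinatorial fact, which I would establish first, is clique confinement. Because $I$ is independent and no edges join non-distinguished vertices of different constituents, a clique of $G_{\mathcal{H}}$ contains at most one vertex of $I$ and all of its remaining vertices lie in a single $G_j$ (two neighbours of a fixed $v_i$ coming from distinct constituents are mutually non-adjacent). Hence every clique of $G_{\mathcal{H}}$ is a clique of some $G_j$, so $\omega(G_{\mathcal{H}}) = \max_j \omega(G_j) = d$, and the maximum cliques of $G_{\mathcal{H}}$ are precisely those of the individual gadgets. This immediately yields both coloring properties. For the forbidden direction, any $\{0,1\}$-coloring $f$ of $G_{\mathcal{H}}$ restricts to a valid $\{0,1\}$-coloring of each $G_j$ (all clique and completeness constraints are inherited), so $f(I) \neq h_j$ for every $j$ and therefore $f(I) \in \{0,1\}^m \setminus \mathcal{H}$. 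For $\{0,1\}$-colorability I would choose any $g \in \{0,1\}^m \setminus \mathcal{H}$ (nonempty since $t < 2^m$); as each $G_j$ realises exactly the complement of $\{h_j\}$ and $g \neq h_j$, there is a coloring of $G_j$ with $f(I) = g$. These agree on the only shared vertices — all of $I$, set to $g$ — and extend independently elsewhere, and clique confinement guarantees the global assignment is a valid coloring.

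The main obstacle is the faithful-dimension requirement $d^*(G_{\mathcal{H}}) = d$. The lower bound $d^*(G_{\mathcal{H}}) \geq \omega(G_{\mathcal{H}}) = d$ is immediate, since a maximum clique forces $d$ mutually orthogonal vectors. For the upper bound I would glue the faithful representations $\rho_j$ of the $G_j$, which already coincide on $I$, into one representation in $\mathbb{C}^d$; faithfulness is inherited within each constituent, so the only danger is an accidental orthogonality or parallelism between a vector of $G_j$ and one of $G_{j'}$ with $j \neq j'$, a non-adjacent pair in $G_{\mathcal{H}}$. I would eliminate these by a genericity argument, acting on each $\rho_j$ by an independent generic unitary that fixes every $|v_i\rangle$ pointwise (equivalently, acts as the identity on $\mathrm{span}(I)$ and freely on its orthogonal complement), so that each bad inner-product condition cuts out only a proper subvariety and all are simultaneously avoided. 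The delicate point I expect to require real care is ensuring enough freedom for this perturbation when a constituent is rigid or when the distinguished vectors span $\mathbb{C}^d$ and the pointwise stabiliser degenerates; there one must exploit that the hypothesis provides a gadget for \emph{every} admissible choice of distinguished vectors, allowing a perturbative adjustment that breaks the coincidences while preserving all prescribed orthogonalities.
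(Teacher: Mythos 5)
Your core construction coincides with the paper's: the paper's entire proof consists of taking the singleton gadgets $G_{\mathcal{H}_1, I}$ and $G_{\mathcal{H}_2, I}$ with the \emph{same} distinguished set $I$, declaring their union to be a gadget for $\mathcal{H}_1 \cup \mathcal{H}_2$, and recursing for larger $|\mathcal{H}|$; your gluing along $I$, your restriction argument for the forbidden direction, and your induction remark are exactly the content hidden in the paper's ``it is readily seen''. Where you genuinely diverge is in the formalization, and the divergence cuts both ways. The paper works with unions of \emph{vector sets} in $\mathbb{C}^d$: since the constituents are already realized by vectors sharing $I$, their union is automatically a vector set in dimension $d$, and any accidental orthogonality between vectors of different constituents simply becomes an edge of the combined orthogonality graph --- no faithful-representation problem ever arises. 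You instead glue \emph{abstract} graphs with no cross edges, which buys you the clean clique-confinement lemma and a transparent colorability argument, but obliges you to realize the glued graph faithfully; the hardest case of your genericity argument --- $\mathrm{span}(I) = \mathbb{C}^d$, where the pointwise stabilizer of $I$ is trivial and your perturbative freedom vanishes --- is precisely the case you admit you cannot close. That obstacle is an artifact of your formalization: the lemma is stated in the vector-set language, and adopting the paper's union dissolves it (the cross edges the union may create only add constraints, which leaves the forbidden direction intact).

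One caveat infects both proofs and is worth making explicit. To color the combined object you need the constituents to admit colorings that agree on $I$ at some common value $g \notin \mathcal{H}$; you secure this by asserting that each singleton gadget ``realises exactly the complement of $\{h_j\}$''. That property is \emph{not} granted by the lemma's hypothesis or by the gadget definition, which only forbids assignments and never promises that the non-forbidden ones are achievable; without it, the glued graph could fail to be $\{0,1\}$-colorable at all, since the constituents might achieve disjoint sets of assignments on $I$. The paper's ``readily seen'' silently requires the same property (and, in the vector-set version, additionally that the chosen colorings do not clash across accidental cross edges or leave a newly formed maximum clique without a $1$). So your account is the more careful of the two; to make it a complete proof, the achievability of every non-forbidden assignment should either be added to the hypothesis or verified for the explicit singleton-gadget constructions to which the paper applies this lemma.
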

\begin{proof}
The proof works recursively as follows. Suppose that for every forbidden set $\mathcal{H} \subset \{0,1\}^m$ with $|\mathcal{H}| = 1$ and for every $I = \{|v_1 \rangle, \dots, |v_m \rangle\}$ there exists a gadget $G_{\mathcal{H}, I}$ in dimension $d$ with $I$ as the distinguished set of vectors. To construct a gadget in dimension $d$ for a new forbidden set $\mathcal{H}' \subset \{0,1\}^m$ with $|\mathcal{H}'| = 2$, we simply consider $\mathcal{H}_1 \subset \{0,1\}^m$ and $\mathcal{H}_2 \subset \{0,1\}^m$ such that $|\mathcal{H}_1| = |\mathcal{H}_2| = 1$ and $\mathcal{H}' = \mathcal{H}_1 \cup \mathcal{H}_2$ and the corresponding gadgets $G_{\mathcal{H}_1, I}$ and $G_{\mathcal{H}_2, I}$ with identical distinguished vector sets $I$. It is readily seen that the union of the two graphs $G_{\mathcal{H}_1, I} \cup G_{\mathcal{H}_2, I}$ forms a gadget in dimension $d$ for the forbidden set $\mathcal{H}'$. The construction for any other $\mathcal{H} \subset \{0,1\}^m$ with $|\mathcal{H}| > 1$ follows in an  analogous manner. 
% For a forbidden set $|\mathcal{H}| > 1$ with mutually non-orthogonal vectors $I = \{|v_1 \rangle, \dots, |v_m \rangle \}$.
%Since we know for any each of these forbidden assignments $(f_{I_1},f_{I_2},\cdots,f_{I_{m-1}},f_{I_m})\subset \mathcal{H}$, there exist a forbidden set $|\mathcal{H}_f|=1$ with $I$ as its independent set which satisfy this forbidden assignment $(f_{I_1},f_{I_2},\cdots,f_{I_{m-1}},f_{I_m})$. To make the independent set $I$ satisfy all of these forbidden assignments $\mathcal{H}$, we simply add all of these $|\mathcal{H}_f|=1$ on $I$.
\end{proof}

We now show how to construct gadgets in dimension $d$ for the forbidden sets $\mathcal{H} \subset \{0,1\}^m$ (with $m \leq d$) of cardinality $|\mathcal{H}| = 1$ using the order $(m,m-1)$ gadgets constructed in previous sections. And we illustrate the construction for all $\mathcal{H} \subset \{0,1\}^3$ with $|\mathcal{H}| = 1$ in Fig. \ref{forbidden3}. 

%Firstly, we will show how to use the high order gadget $G_{(k,k-1)}$ ($k\leq\omega(G)$) to construct the general gadget whose number of assignment elements is $|\mathcal{H}|=2^{k}-1$. These are the 1-assignment $(f_{I_1},f_{I_2},\cdots,f_{I_{k-1}},f_{I_k})$ forbidden set, which can also be written as $(f_{I_1},f_{I_2},\cdots,f_{I_{k-1}})\Rightarrow \overline{f_{I_k}}$ set, where $f_{I_i}\in\{0,1\},~ i=1,\cdots,k$ and $\overline{0}=1, \overline{1}=0$. Totally, there are $2^k$ such kind of general gadgets $G$ ($\mathcal{H}$).
\begin{itemize}
    \item First remark that the order $(m,m-1)$ gadgets on their own directly correspond to the all-ones forbidden set $\mathcal{H} = \{(\underbrace{1,\dots, 1}_m)\}$.
    
%    High order gadget $G_{(k,k-1)}$ ($k\leq\omega(G)$) directly corresponds to $\mathcal{H}=\{0,1\}^{\times k}\setminus\{(1,1,\cdots,1,1)\}$, \ie the $(1,1,\cdots,1,1)~forbidden$ set or $ (1,1,\cdots,1)\Rightarrow 0$ set.
    
    \item The construction of the gadget for the forbidden set $\mathcal{H} = \{(\underbrace{1,\dots, 1}_{m-1}, 0)\}$ is illustrated in Fig. \ref{forbidden3}(a). We utilize in this construction an order $(m,m-1)$ gadget from the previous section with vertices $v_1, \dots, v_m$ (illustrated by the dashed triangle in the figure). Now consider a clique of size $d$ with $v_m$ as one of the vertices, and such that each vertex in the clique bar one call it $u$ is adjacent to one of the vertices $v_1, \dots, v_{m-1}$ (such an adjacency structure is possible since the order $(m,m-1)$ gadget construction works for arbitrary vectors $|v_1 \rangle, \dots, |v_m \rangle$ in dimension $d$). It then follows that the combined set of vectors is a gadget for the forbidden set $\mathcal{H}$ with the distinguished vectors being $|v_1 \rangle, \dots, |v_{m-1} \rangle, |u \rangle$.

  %  $\mathcal{H}=\{0,1\}^{\times k}\setminus\{(1,1,\cdots,1,0)\}$, the $(1,1,\cdots,1,0)~forbidden$ set 
    %(note that the position of 0 can be any one of these $k$ vertices) 
    %or $(1,1,\cdots,1)\Rightarrow 1$ set  can be easily construct by a high order gadget $G_{(k,k-1)}$ together with a clique of size $\omega(G)$. One of the vertex $v$ of the high order gadget $G_{(k,k-1)}$ is in this clique, and other $\omega(G)-2$ vertices in the clique are connected by other vertices $v_i,~i=1,\cdots,k-1$ in the high order gadget $G_{(k,k-1)}$. Thus, when all the vertices $v_i,~i=1,\cdots,k-1$ except $v$ in the high order gadget $G_{(k,k-1)}$ are assigned 1, the remaining vertex $u$ in the clique must be 1, \ie it's impossible to assign 0 to this remaining vertex.
    % \begin{figure}[htbp]
    %   \centering
    %   \includegraphics[width=0.45\textwidth]{11_1_n.pdf}\\
    %   \caption{The construction of the $(1,1,\cdots,1)\Rightarrow 1$ set. The dashed circle represents a high order gadget $G_{(k,k-1)}$ and the  solid circle represents a clique of size $\omega(G)$. The red vertices compose the independent set $I$ whose $01$-assignments is $\mathcal{H}=\{0,1\}^{\times k}\setminus\{(1,1,\cdots,1,0)\}$.}
    % \end{figure}

    \item  Any other gadget for a specified forbidden set $\mathcal{H}=\{(f_{I_1},f_{I_2},\ldots,f_{I_{m-1}},f_{I_m})\}$ will be constructed by using the above two gadgets for $\mathcal{H}_1 = \{(\underbrace{1,\dots, 1}_m)\}$ and $\mathcal{H}_2 = \{(\underbrace{1,\dots, 1}_{m-1},0)\}$.

    We begin with $m-1$ bases sets in dimension $d$, denoted as $C_1, C_2, \dots, C_{m-1}$. We choose these sets such that no two vectors in different bases sets are identical or orthogonal to each other (one can do this by picking a single basis set $C_1$ and applying a suitable unitary matrix $U_d$ to $C_1$). We also choose a unit vector $|u\rangle$, that is not orthogonal to any vector in any of the bases $C_i,i\in[m-1]$.
    % We begin with $m-1$ disjoint cliques of size $d$, denoted as $C_1,C_2,\ldots,C_{m-1}$, and one vertex $u$, which is not adjacency to any vertex in clique $C_i,i\in[m-1]$. 
    Select one vector $|v_i\rangle$ in each basis $C_i, i\in[m-1]$, these vectors $|v_i\rangle,i\in[m-1]$ together with $|u \rangle$, will form the mutually non-orthogonal set $I=\{|v_1\rangle,|v_2\rangle,\ldots,|v_{m-1}\rangle,|u\rangle\}$ for the forbidden set $\mathcal{H}=\{(f_{I_1},f_{I_2},\ldots,f_{I_{m-1}},f_{I_m})\}$.
    
    Now construct all possible sets $T_j$ consisting of elements $t_{ji},i\in[m-1]$ from each of the bases $C_i$ and such that $t_{ji}=|v_i\rangle$ if and only if $f_{|v_i\rangle}=1$ in the forbidden assignment $\mathcal{H}$. Thus, we have $(d-1)^{m-1-\sum_{i=1}^{m-1} f_{|v_i\rangle}}$ sets $T_j$ with $|T_j|=m-1$ for each $j\in[(d-1)^{m-1-\sum_{i=1}^{m-1} f_{|v_i\rangle}}]$. As the last step of the construction, for each $j\in[(d-1)^{m-1-\sum_{i=1}^{m-1} f_{|v_i\rangle}}]$, construct a gadget for the forbidden set $H'=\{(\underbrace{1,\dots, 1}_{m-1},f_{|u\rangle})\}$ with the vectors in $T_j\cup\{|u\rangle\}$ being the distinguished vectors (where $f_{|u\rangle}$ is the forbidden assignment for the vector $|u\rangle$). Such gadgets for the forbidden set $H'=\{(\underbrace{1,\dots, 1}_{m-1},f_{|u\rangle})\}$ and distinguished vectors  $T_j\cup\{|u\rangle\}$ are illustrated with the dashed triangles in Fig. \ref{forbidden3}(b) and (c).

%  And in the forbidden assignment bit string $(f_{I_1},f_{I_2},\cdots,f_{I_{k-1}})$, for all $i=1,\cdots,k-1$ if $f_{v_i}=1$, then add $v_i$ into the set $Temp$. Otherwise, if $f_{v_i}=0$, choose any one vertex $v_{ij}$ in clique $C_i$ (but not $v_i$) and add it into the set $Temp$. Suppose there are $n$ numbers of $0$ in the forbidden assignment bit string $(f_{I_1},f_{I_2},\cdots,f_{I_{k-1}})$, then totally there are $(\omega(G)-1)^n$ possible temporary sets $Temp$. The last step of construction is that for any possible set $Temp$,  all the vertices in this set $Temp$ and the special vertex $u$ are independent and connected by a $ (1,1,\cdots,1)\Rightarrow \overline{f_{I_k}}$ (note that $u$ is always the last vertex). If these vertices are not independent, then put them aside and do nothing.
\end{itemize}  
Remark that the above gadget constructions for forbidden sets $\mathcal{H}$ with  $|\mathcal{H}|=1$ are valid for any arbitrary set of mutually non-orthogonal vectors $I = \{|v_1 \rangle, \dots, |v_m \rangle \}$ in dimension $d$, since all these constructions use the order $(m,m-1)$ gadgets from the previous section which work for arbitrary mutually non-orthogonal vectors. Finally, it's worth noting that Kochen Specker proofs themselves come under the umbrella of the generalised gadget structures defined here, with $\mathcal{H} = \{0,1\}^m$ for some independent set in the graph $I$ of size $|I| = m$ and for arbitrary $m \geq 1$. In other words, there is no valid $\{0,1\}$-assignment to the vertices of the independent set $I$.

%and the $(m,m-1)$ gadget construction works for arbitrary mutually non-orthogonal vectors in dimension $d$.
    
% We illustrate the above construction for all $\mathcal{H} \subset \{0,1\}^3$ with $|\mathcal{H}| = 1$ in Fig. \ref{forbidden3}. 

\begin{figure}[t]
\centering
\subfigure[$\mathcal{H}=\{(1,1,0)\}$]{
\begin{minipage}[t]{0.3\textwidth}
\centering
\includegraphics[width=0.9\textwidth]{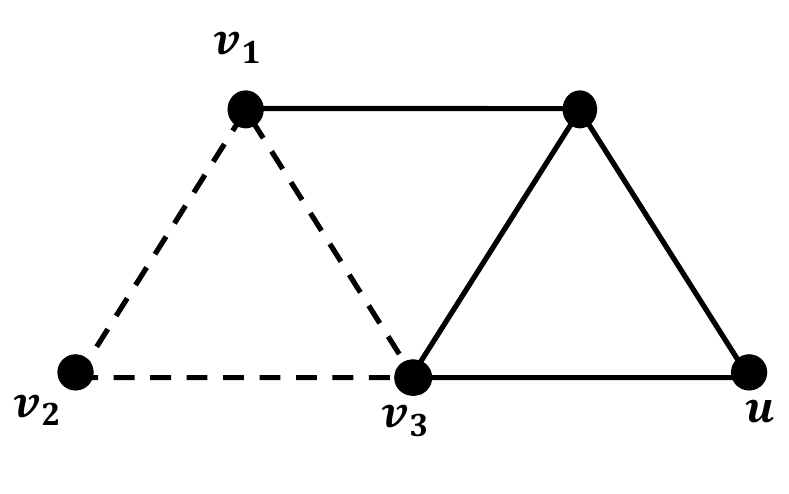}
\end{minipage}
}
\subfigure[$\mathcal{H}=\{(1,0,f_u)\}$ or $\mathcal{H}=\{(0,1,f_u)\}$
(swap the position of $v_1,v_2$) ]{
\begin{minipage}[t]{0.3\textwidth}
\centering
\includegraphics[width=0.9\textwidth]{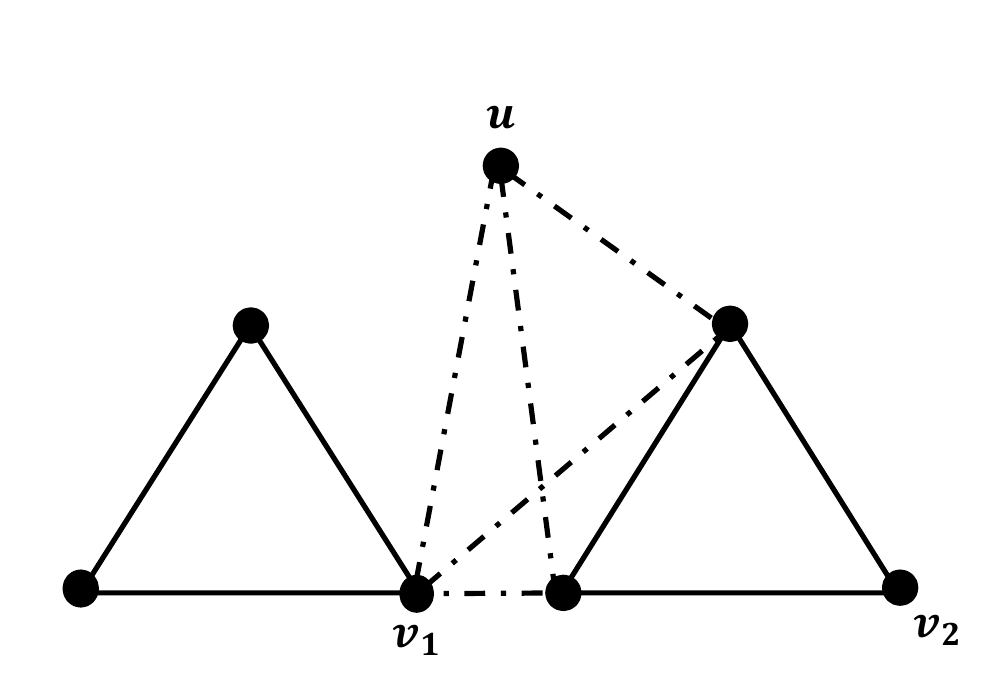}
\end{minipage}
}
\subfigure[$\mathcal{H}=\{(0,0,f_u)\}$]{
\begin{minipage}[t]{0.3\textwidth}
\centering
\includegraphics[width=0.9\textwidth]{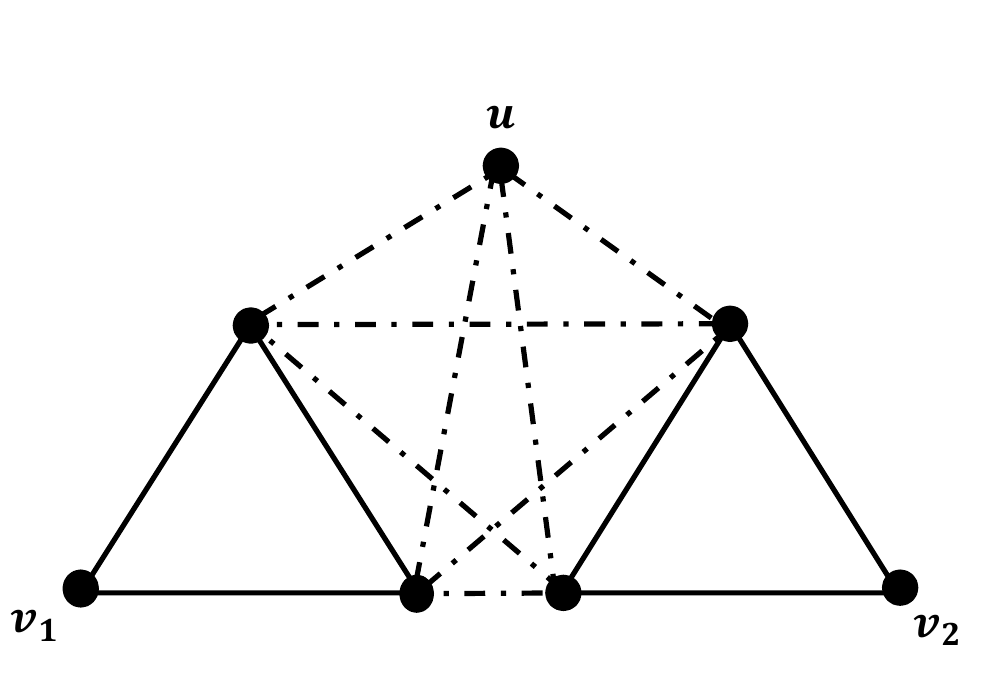}
\end{minipage}
}
\caption{The constructions of gadgets for forbidden sets $\mathcal{H} \subset \{0,1\}^3$ with $|\mathcal{H}| = 1$. The dashed triangle in (a) represents the order $(3,2)$ gadget constructed in the previous section, the assignment $(1,1,0)$ if forbidden for the independent set of vertices $\{v_1, v_2, u\}$. Each dash-dotted triangle in (b) and (c) corresponds to a gadget for the forbidden set $\mathcal{H}=\{(1,1,f_u)\}$ constructed previously in this section. Again the assignment in $\mathcal{H}$ is forbidden for the independent set of vertices $\{v_1, v_2, u\}$. }
\label{forbidden3}
\end{figure}

\end{document}